\documentclass{article}

\usepackage{graphicx}
\usepackage{xcolor}

\usepackage{float}
\usepackage{amsmath}
\usepackage{amsthm}
\usepackage{amssymb}
\usepackage{authblk}
\usepackage{url}
\usepackage{cite}
\usepackage{hyperref}
\usepackage{cleveref}
\usepackage{fullpage}

\newtheorem{theorem}{Theorem}
\newtheorem{lemma}[theorem]{Lemma}

\newtheorem{corollary}[theorem]{Corollary}

\newtheorem{observation}[theorem]{Observation}
\newtheorem{property}[theorem]{Property}

\newcommand{\Z}{{\mathbb{Z}}}
\newcommand{\bin}{{\mathsf{bin}}}
\renewcommand{\int}{{\mathsf{int}}}
\newcommand{\lsb}{{\mathsf{LSB}}}

\newcommand{\LCP}{{\mathsf{LCP}}}

\newcommand{\DPre}{{\mathsf{DPre}}}
\newcommand{\DSub}{{\mathsf{DSub}}}
\newcommand{\Rev}{{\mathsf{Rev}}}
\newcommand{\Rot}{{\mathsf{Rot}}}

\newcommand{\Ratio}{{\mathsf{Ratio}}}

\newcommand{\calG}{{\mathcal{G}}}

\newcommand{\calE}{{\mathcal{E}}}
\newcommand{\calF}{{\mathcal{F}}}
\newcommand{\calT}{{\mathcal{T}}}
\newcommand{\depth}{{\mathsf{depth}}}
\newcommand{\height}{{\mathsf{height}}}
\newcommand{\cost}{{\mathsf{cost}}}
\DeclareMathOperator*{\argmin}{arg\,min}
\DeclareMathOperator{\polylog}{polylog}

\title{Sensitivity and Size Relationships of the Lempel--Ziv Factorization}
\author[1]{Hiroki Shibata\thanks{\texttt{shibata.hiroki.753@s.kyushu-u.ac.jp}}}
\author[1]{Yuto Fujie\thanks{\texttt{fujie.yuto.104@s.kyushu-u.ac.jp}}}
\affil[1]{Joint Graduate School of Mathematics for Innovation, Kyushu University, Japan}

\begin{document}

\maketitle

\begin{abstract}
The Lempel--Ziv (LZ) factorization is one of the most fundamental methods for compressing highly repetitive strings, and the number of phrases in its factorization is considered a repetitiveness measure.
Sensitivity to an edit operation measures the maximum increase in a repetitiveness measure when the operation is applied to a string.
While asymptotically tight bounds are known for the sensitivity of the LZ factorization to single-character edits, whether its multiplicative sensitivity is bounded by a constant has remained open for operations that change a large part of the structure of a string, such as prefix deletion, substring deletion, cyclic rotation, and string reversal.
We resolve this question.
For each of these four operations, we construct a family of strings in which a string of length $n$ has sensitivity $\Omega(\log n)$ to that operation.
We also determine the size relationships among the LZ factorization, collage systems and the lex-parse.
We construct a family of strings whose LZ factorizations are $\Omega(\log n)$ times larger than their minimum collage systems, and a family of strings whose lex-parses are $\Omega(\log n)$ times larger than their LZ factorizations.
Furthermore, we prove that there exists a family of strings for which every LZ encoding of height $O(\polylog n)$ is $\Omega(\log n / \log \log n)$ times larger than the standard LZ factorization.
Except for the lower bound on height-bounded LZ encodings, all of these lower bounds are asymptotically tight, matching $O(\log n)$ upper bounds.
\end{abstract}

\section{Introduction}

A \emph{highly repetitive string} is a string in which most of the text can be expressed as copies of other parts of the string.
Such strings are common in large data collections, such as collections of genomic sequences and versioned documents.
Statistical compression measures such as Shannon entropy do not capture how much repetition such a string contains.
\emph{Repetitiveness measures}~\cite{Navarro21a} were introduced to quantify this repetition.
These measures are either based on an actual compression method or defined directly from a combinatorial property of the string.

The \emph{Lempel--Ziv (LZ) factorization}~\cite{ZivL77} is one of the most fundamental methods for compressing strings.
It factorizes a string into \emph{phrases}, each of which is either a single character or a reference to an earlier occurrence in the string.
The number of phrases in the LZ factorization is also used as a repetitiveness measure.
First introduced nearly fifty years ago, the LZ factorization remains central to both theory and practice.
Because of its simplicity and compression performance, it is used in classical compression tools such as \texttt{gzip} and modern tools such as \texttt{zstd} and \texttt{Brotli}~\cite{rfc1952,rfc8878,AlakuijalaFFKOS19}.

In this paper, we study the \emph{sensitivity} of the LZ factorization.
Sensitivity to an edit operation measures the maximum increase in a repetitiveness measure when the operation is applied to a string.
Akagi et al.~\cite{AkagiFI23} first introduced this notion for the LZ factorization and other repetitiveness measures under single-character edits such as insertion, deletion, and substitution.
For single-character edits, asymptotically tight bounds are known for both the multiplicative and additive sensitivities of the LZ factorization.
Here, multiplicative sensitivity measures the ratio between the sizes before and after an edit, whereas additive sensitivity measures their difference.
In contrast, for operations that change a large part of the structure of a string, these multiplicative sensitivities of the LZ factorization have not been determined.
Among these operations, the sensitivities to string reversal and prefix deletion have been studied in the context of the \emph{symmetry} and the \emph{monotonicity} of repetitiveness measures~\cite{BannaiFGINPU26,GiulianiILPST21,MitsuyaNIBT21,KociumakaNP23}.
Whether these sensitivities are bounded by a constant has remained open for many years.

Several questions about the size relationships between the LZ factorization and other repetitiveness measures also remain open.
For \emph{collage systems}~\cite{KidaMSTSA03}, which extend grammars with a rule for truncating a substring, this relationship has not been fully clarified.
Although the minimum size of a collage system is known to be at most a constant multiple of the number of phrases in the LZ factorization~\cite{NavarroOP21}, no string family is known for which the minimum size of a collage system is asymptotically smaller.
For the \emph{lex-parse}~\cite{NavarroOP21}, a variant of the LZ factorization that factorizes a string according to the lexicographic order of its suffixes, a family of strings was known for which the lex-parse is asymptotically smaller than the LZ factorization~\cite{NavarroOP21}, but no family was known for which the reverse holds.

We also consider efficient random access to strings represented by the Lempel--Ziv factorization.
A natural way to achieve faster random access using space linear in the factorization size is to construct a \emph{height-bounded} LZ encoding~\cite{BannaiFHMP24,LiptakM024}, in which the height of the forest induced by the reference relationship is bounded.
While several studies have proposed this variant and treated its size as a repetitiveness measure, the worst-case ratio between the size of the standard LZ factorization and that of a height-bounded LZ encoding remains unknown.
In particular, it is unknown whether every string has an LZ encoding of height $O(\polylog n)$ whose size is asymptotically no larger than that of its LZ factorization.

In this paper, we resolve the multiplicative sensitivity of the LZ factorization to these operations.
For each of prefix deletion, substring deletion, cyclic rotation, and string reversal, we give a family of strings in which a string of length $n$ has sensitivity $\Omega(\log n)$ to that operation.
General upper bounds of $O(\log n)$ show that all four bounds are asymptotically tight.

Moreover, we determine the size relationships among the LZ factorization, collage systems, and the lex-parse.
Specifically, we give a family of strings whose LZ factorizations are $\Omega(\log n)$ times larger than their minimum collage systems, and a family of strings whose lex-parses are $\Omega(\log n)$ times larger than their LZ factorizations.
As with the above result, general upper bounds of $O(\log n)$ on these ratios show that both worst-case ratios are asymptotically tight.

Furthermore, we prove that there exists a family of strings for which every LZ encoding of height $O(\polylog n)$ is $\Omega(\log n / \log \log n)$ times larger than the standard LZ factorization.
This result shows that \emph{balancing} the LZ factorization without asymptotically increasing its size is impossible, resolving a major open problem about height-bounded Lempel--Ziv encodings.
It indicates the difficulty of supporting efficient random access using space linear in the size of the LZ factorization.

\subsection*{Related Works}

As mentioned above, the multiplicative sensitivity of the LZ factorization to a single-character edit is known to be bounded by a constant.
Indeed, Akagi et al.~\cite{AkagiFI23} show that this sensitivity is exactly $3$, with matching upper and lower bounds.
As an example for operations more complex than a single-character edit, Bathie et al.~\cite{BathieHLZ26} show that editing $k$ characters of a string $S$ yields a string whose LZ factorization has $O(z(S) + 4k)$ phrases, where $z(S)$ denotes the number of phrases in the LZ factorization of $S$.

Sensitivity to string reversal has been actively studied, both before and after the general notion of sensitivity for repetitiveness measures was proposed~\cite{CohnH97,GiulianiILPST21}.
The number of runs in the Burrows--Wheeler transform (BWT) is a known example with large sensitivity to string reversal.
Before the general notion of sensitivity was proposed, Giuliani et al.~\cite{GiulianiILPST21} had already studied the increase $\rho$ in the number of runs of the BWT under string reversal, and showed $\rho \in O(\log^2 n)$ and $\rho \in \Omega(\log n)$.
Recently, Bannai et al.~\cite{BannaiFGINPU26} investigated the sensitivity of many repetitiveness measures to string reversal.
Their work gives a lower bound of $3$ for the multiplicative sensitivity of the LZ factorization to string reversal, but whether this sensitivity is bounded by a constant remained unclear.
Indeed, they write that ``for LZ parsing and some of its variants, it has been conjectured that even though the number of phrases of the parses can change after reversing the string, the ratio $z(w^R)/z(w)$ is bounded by a constant''~\cite{BannaiFGINPU26}, where $w^R$ denotes the reversal of $w$.
Navarro et al.~\cite{NavarroOP21} similarly note that ``we can even prove $z = O(c)$ for general collage systems if it holds that there is only a constant gap between $z$ for $T$ and for its reverse, which is another open question,'' where $c(S)$ denotes the minimum size of a collage system deriving $S$.
Constructing a non-constant lower bound for this sensitivity would therefore resolve the open problem of the size relationship between collage systems and the LZ factorization.

A similar gap remains for the sensitivity to prefix deletion.
For the LZ factorization, monotonicity with respect to suffix deletion follows directly from the definition of the number of phrases, but no such monotonicity holds for prefix deletion.
However, it was not known whether this sensitivity is bounded by a constant.
The best known lower bound came from a bound of $3/2$ for the multiplicative sensitivity of the LZ factorization to a single-character cyclic rotation~\cite{BannaiCR24}, which also gives the same lower bound for prefix deletion.

For the lex-parse and the LZ factorization, the $k$-th \emph{Fibonacci string} $F_k$ is known to separate the two measures in one direction, since its LZ factorization has $\Theta(k)$ phrases while its lex-parse has only $O(1)$ phrases~\cite{NavarroOP21}.
However, the opposite direction was not known.
Nakashima et al.~\cite{0001KFIB24} show that a change of alphabet order or a single-character edit can increase the size of the lex-parse by a factor of $\Theta(\log n)$, but even their examples do not give a string for which the lex-parse is larger than the LZ factorization.

Height-bounded LZ encodings~\cite{BannaiFHMP24,LiptakM024} consist of an LZ-like factorization and source positions whose induced referencing forest has bounded height.
Unlike the LZ factorization, their phrases need not be chosen by the longest-match greedy strategy.
For a string $S$ of length $n$ and a positive integer $h$, let $\hat{z}_h(S)$ denote the minimum number of phrases in an LZ encoding of $S$ with height at most $h$.
Bannai et al.~\cite{BannaiFHMP24} proved that for some constant $c$, $\hat{z}_{c\log n}(S)$ is at most the size of the smallest run-length grammar~\cite{NishimotoIIBT16} up to a constant factor and is asymptotically smaller for some family of strings.
Cicalese and Ugazio~\cite{CicaleseU25} established an asymptotic separation between $\hat{z}_h(S)$ and $z(S)$ for constant $h$, but whether such a separation exists for polylogarithmic $h$ remained open.
At the end of their paper, Bannai et al.~\cite{BannaiFHMP24} close their paper with the question, ``Can we balance the height (achieve $O(\polylog(n))$ height) of an LZ-like encoding or a modified LZ-like encoding, by only increasing the size by a constant factor?''

Explicit upper bounds are not available in the literature for all of these problems, but combining known results yields an $O(\log n)$ upper bound for each of them.
For string reversal, Bannai et al.~\cite{BannaiFGINPU26} give this upper bound explicitly.
For prefix deletion, the number of phrases in the LZ factorization is known to be at most $O(\log n)$ times the substring complexity of the string~\cite{KociumakaNP23}, and the monotonicity of the substring complexity gives an $O(\log n)$ upper bound directly.
The same argument also applies to substring deletion and cyclic rotation.
The size relationships among the LZ factorization, collage systems, and the lex-parse follow the same pattern, as does the relationship between the LZ factorization and height-bounded LZ encodings of polylogarithmic height.
Since the size of each of these representations is at least the substring complexity and at most $O(\log n)$ times it~\cite{Navarro21a,BannaiFHMP24}, an $O(\log n)$ upper bound on their ratios follows immediately.
 \section{Preliminaries and Basic Concepts}

\subsection{Basic Notations}

For a positive integer $n$, let $\Z_{n} = \{0, \dots, n-1\}$.
For integers $a \leq b$, let $[a, b] = \{i \in \Z \mid a \leq i \leq b\}$.
Let $\Sigma$ be an alphabet.
An element of $\Sigma$ is called a character.
A string $S$ of length $n = |S|$ over the alphabet $\Sigma$ is a sequence $S[0] \cdots S[n-1]$ of characters such that $S[i] \in \Sigma$ for all $i \in \Z_{n}$.
Let $\varepsilon$ denote the string of length $0$.
For any two strings $X$ and $Y$, we denote by $XY = X[0] \cdots X[|X|-1] Y[0] \cdots Y[|Y|-1]$ the concatenation of $X$ and $Y$.
For any string $X$, we denote by $X^R = X[|X|-1] \cdots X[0]$ its reversal.
If $S = XYZ$ for some strings $X, Y, Z \in \Sigma^*$, then $X$, $Y$, and $Z$ are called a prefix, a substring, and a suffix of $S$, respectively.
For $0 \leq i \leq j < n$, we denote by $S[i..j] = S[i]\cdots S[j]$ the substring of $S$ from position $i$ to $j$.
For convenience, define $S[i..j] = \varepsilon$ when $j < i$.
For any strings $X$ and $Y$, let $\LCP(X, Y)$ denote the length of the longest common prefix of $X$ and $Y$.

Let $\prec$ be a total order on $\Sigma$.
The \emph{lexicographic order} on $\Sigma^*$ defined by $\prec$ is given by $X \prec Y$ if and only if $X$ is a proper prefix of $Y$ or there exists an integer $0 \leq i < \min\{|X|,|Y|\}$ such that $X[0..i-1]=Y[0..i-1]$ and $X[i] \prec Y[i]$ for any strings $X, Y \in \Sigma^*$.

We define the binary alphabet $\Sigma_2 = \{\mathtt{0}, \mathtt{1}\}$.
For an integer $x \geq 0$, let $\bin(x) \in \left(\Sigma_2\right)^*$ be the binary representation of $x$.
For integers $k \geq 0$ and $x \in \Z_{2^k}$, let $\bin_k(x)$ be the $k$-bit binary representation of $x$.
For a binary string $B$, let $\int(B)$ be the integer $x$ such that $\bin_{|B|}(x) = B$.
For an integer $x \geq 1$, let $\lsb(x)$ be the position of the least significant $\mathtt{1}$ in $\bin(x)$, where the rightmost bit is at position $0$.
Equivalently, $\bin(x)$ has $\mathtt{1}\mathtt{0}^{\lsb(x)}$ as a suffix.
For example, $\bin_4(4) = \texttt{0100}$, $\int(\mathtt{0100}) = 4$, and $\lsb(4) = 2$.

For each integer $k \geq 0$, the \emph{bit-reversal permutation} $\pi_k: \Z_{2^k} \to \Z_{2^k}$ is the bijection defined by
$\pi_k(x) = \int\left(\bin_k(x)^R\right)$ for every $x \in \Z_{2^k}$.
For example, $\pi_4(5) = \int\left(\bin_4(5)^R\right) = \int\left((\mathtt{0101})^R\right) = \int\left(\mathtt{1010}\right) = 10$.
Reversing a $k$-bit string twice restores the original string.
Thus, $\pi_k(\pi_k(x)) = x$ for every $x \in \Z_{2^k}$, so $\pi_k = \pi^{-1}_k$.
For instance, $(\pi_3(0), \dots, \pi_3(7)) = (\pi^{-1}_3(0), \dots, \pi^{-1}_3(7)) = (0, 4, 2, 6, 1, 5, 3, 7)$.

\subsection{Repetitiveness Measures}

A \emph{repetitiveness measure}~\cite{Navarro21a} is a function $f: \Sigma^* \rightarrow \mathbb{R}_{\geq 0}$.
For two repetitiveness measures $f$ and $g$ and a positive integer $n$, the \emph{worst-case ratio} of $f$ to $g$ is defined by $\Ratio_{f,g}(n) = \max_{S \in \Sigma^n} f(S) / g(S)$.

A sequence of nonempty strings $F_1, \dots, F_m$ is a \emph{factorization} of a string $S$ if $S = F_1 \cdots F_m$.
Each element $F_i$ is called a \emph{phrase}.
For each phrase $F_i$, let $b_i = \sum_{k=1}^{i-1}|F_k|$ be its starting position in $S$.
The \emph{Lempel--Ziv (LZ) factorization}~\cite{ZivL77} of a string $S$ is constructed from left to right as follows.
If $S[b_i]$ has no occurrence starting before $b_i$, then $F_i = S[b_i]$.
Otherwise, $F_i$ is the longest prefix of $S[b_i..n-1]$ that has an occurrence starting before $b_i$.
An \emph{LZ-like factorization} of a string $S$ is a factorization $F_1, \dots, F_m$ of $S$ such that every phrase $F_i$ is either a single character or satisfies $F_i = S[j..j+|F_i|-1]$ for some $0 \leq j < b_i$.
In an LZ-like factorization, a phrase with an earlier occurrence is not required to be the longest prefix of $S[b_i..n-1]$ with such an occurrence.
In both LZ and LZ-like factorizations, an earlier occurrence may overlap the corresponding phrase.
It is known that the LZ factorization has the minimum number of phrases among all LZ-like factorizations of $S$.
We denote the number of phrases in the LZ factorization of $S$ by $z(S)$.
The measure $z$ is monotone under appending a suffix.
That is, $z(X) \leq z(XY)$ for any strings $X$ and $Y$.

For a string $S$ of length $n$, the \emph{lex-parse}~\cite{NavarroOP21} of $S$ with respect to $\prec$ is the factorization $F_1, \dots, F_m$ of $S$ constructed from left to right as follows.
Let $\ell_i$ be the maximum of $\LCP(S[b_i..n-1], S[p..n-1])$ over all positions $p \in \Z_{n}$ satisfying $S[p..n-1] \prec S[b_i..n-1]$.
If there is no such position $p$, then $\ell_i = 0$.
The factor $F_i$ is defined as $F_i = S[b_i..b_i+\max\{\ell_i, 1\}-1]$.
We denote the number of phrases in the lex-parse of $S$ by $v_\prec(S)$, or simply by $v(S)$ when the order $\prec$ is clear.

A \emph{collage system}~\cite{KidaMSTSA03} consists of a sequence of nonterminals $X_1, \ldots, X_m$ together with exactly one production rule for each nonterminal.
The production rule for $X_i$ has one of the following four forms and recursively defines $\exp(X_i)$.
\begin{enumerate}
    \item A \emph{terminal rule} has the form $X_i \rightarrow c$ for a character $c \in \Sigma$.
    We define $\exp(X_i) = c$.
    \item A \emph{concatenation rule} has the form $X_i \rightarrow X_jX_k$ for $j,k < i$.
    We define $\exp(X_i) = \exp(X_j)\exp(X_k)$.
    \item An \emph{exponent rule}
    has the form $X_i \rightarrow (X_j)^r$ for $j < i$ and an integer $r \geq 2$.
    We define $\exp(X_i) = \exp(X_j)^r$.
    \item A \emph{truncation rule} has the form $X_i \rightarrow X_j[b..e]$ for $j < i$ and integers $0 \leq b \leq e < |\exp(X_j)|$.
    We define $\exp(X_i) = \exp(X_j)[b..e]$.
\end{enumerate}
For each $1 \leq i \leq m$, the rule for $X_i$ refers only to nonterminals with smaller indices, so $\exp(X_i)$ is well-defined.
A collage system \emph{derives} a string $S$ if $\exp(X_m) = S$.
The size of a collage system is the number $m$ of its nonterminals.
For a string $S$, let $c(S)$ denote the minimum size of collage systems that derive $S$.
The measure $c$ is symmetric.
That is, $c(S) = c(S^R)$ holds for any string $S$.

The \emph{substring complexity} $\delta(S)$~\cite{KociumakaNP23,ChristiansenEKN21} of a string $S$
is defined as $\delta(S) = \max_{k \geq 1} d_k(S) / k$,
where $d_k(S)$ is the number of distinct substrings of length $k$ in $S$.
The substring complexity gives lower and upper bounds on many repetitiveness measures.
Specifically,
$\delta(S) \in O(f(S))$ and $f(S) \in O\left(\delta(S) \log \left(1 + \frac{|S|}{\delta(S)}\right)\right)$ hold for any $f \in \{ z, c, v \}$~\cite{Navarro21a}.
These inequalities imply $\Ratio_{f,g}(n) \in O(\log n)$ for any $f, g \in \{ z, c, v\}$.
The substring complexity $\delta$ is monotone and symmetric.
That is, $\delta(X) \leq \delta(Y)$ for any strings $X$ and $Y$ such that $X$ is a substring of $Y$, and $\delta(X) = \delta(X^R)$ for every string $X$.

\begin{property} \label{prop:delta_concatenation}
For any two strings $U$ and $V$, $\delta(UV) \leq \delta(U) + \delta(V) + 1$.
\end{property}
\begin{proof}
For every $k \geq 1$, $d_k(UV) \leq d_k(U) + d_k(V) + k - 1$.
Indeed, every length-$k$ substring of $UV$ occurs in $U$, occurs in $V$, or crosses the boundary between $U$ and $V$.
There are at most $k - 1$ substrings of the last type, which proves the inequality.
Thus, $\delta(UV) = \max_{k \geq 1} d_k(UV) / k \leq \max_{k \geq 1} (d_k(U) + d_k(V) + k - 1) / k \leq \delta(U) + \delta(V) + 1$.
\end{proof}

\subsection{Sensitivities}

The multiplicative \emph{sensitivity} of a repetitiveness measure is the maximum factor by which its value can increase when an operation is applied to a string~\cite{AkagiFI23}.
Each sensitivity considered below is a function of the string length $n$.
For each positive integer $n$ and each repetitiveness measure $f$, the sensitivities of $f$ to
\emph{prefix deletion} $\DPre_f(n)$,
\emph{substring deletion} $\DSub_f(n)$,
\emph{string reversal} $\Rev_f(n)$, and
\emph{cyclic rotation} $\Rot_f(n)$ are defined as follows:
\begin{align*}
    \DPre_f(n) &= \max_{\substack{S,T \in \Sigma^*\\ |ST|=n}} f(T) / f(ST), \\
    \DSub_f(n) &= \max_{\substack{A,B,C \in \Sigma^*\\ |ABC|=n}} f(AC) / f(ABC), \\
    \Rev_f(n)  &= \max_{\substack{T \in \Sigma^*\\ |T|=n}} f(T^R) / f(T), \\
    \Rot_f(n)  &= \max_{\substack{S,T \in \Sigma^*\\ |ST|=n}} f(TS) / f(ST).
\end{align*}
Substring complexity gives the following upper bounds on the sensitivities of $z$.
\begin{theorem} \label{thm:upper_bounds}
For each function $\textsf{Func} \in \{ \DPre, \DSub, \Rev, \Rot \}$,
$\textsf{Func}_z(n) \in O(\log n)$.
\end{theorem}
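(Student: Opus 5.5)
The plan is to sandwich $z$ between the substring complexity $\delta$ and $\delta \log n$, and use the monotonicity, symmetry and near-subadditivity of $\delta$ to control how $\delta$ changes under each operation. The preliminaries give two constants $c_1, c_2 > 0$ such that for every string $X$,
\[
c_1 \delta(X) \le z(X) \le c_2\, \delta(X) \log\left(1 + \frac{|X|}{\delta(X)}\right).
\]
Since $\delta(X) \ge 1$ for every nonempty $X$ (because $d_1(X) \ge 1$), the right-hand side is at most $c_2 \delta(X) \log(1+|X|)$. In every case below, the transformed string has length at most $n$. So it suffices to show that its $\delta$-value is at most a constant times the $\delta$-value of the original string.

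The four operations are handled as follows.

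\textbf{Prefix deletion.} $T$ is a substring of $ST$, so monotonicity gives $\delta(T) \le \delta(ST)$. Hence
\[
z(T) \le c_2 \delta(ST)\log(1+n) \le (c_2/c_1)\, z(ST)\log(1+n).
\]

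\textbf{String reversal.} Symmetry gives $\delta(T^R) = \delta(T)$, and the same chain of inequalities applies.

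\textbf{Substring deletion.} Property~\ref{prop:delta_concatenation} gives
\[
\delta(AC) \le \delta(A) + \delta(C) + 1.
\]
By monotonicity, both $\delta(A)$ and $\delta(C)$ are at most $\delta(ABC)$. Because $\delta(ABC) \ge 1$ whenever $ABC$ is nonempty, this yields $\delta(AC) \le 3\delta(ABC)$.

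\textbf{Cyclic rotation.} The same argument applies to $TS$. We have $\delta(TS) \le \delta(T) + \delta(S) + 1 \le 3\delta(ST)$, since both $S$ and $T$ are substrings of $ST$.

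A few degenerate cases need to be checked. When the transformed string is empty, $z = 0$ and the ratio is $0$. When one of $A$ or $C$ (or $S$ or $T$) is empty, the bound only becomes easier. Taking the maximum over all inputs of length $n$ then gives $\textsf{Func}_z(n) \le (3c_2/c_1)\log(1+n) \in O(\log n)$.

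There is no real obstacle here; the argument is a direct combination of the cited bounds. The only point needing a little care is the logarithmic factor. The $O(\log n)$ upper bound is applied to the transformed string, whose own $\delta$-value may differ from the original's, so we need $\log(1 + |X|/\delta(X)) \le \log(1+n)$ uniformly. This holds because $\delta(X) \ge 1$ and $|X| \le n$.
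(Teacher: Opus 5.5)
Your proposal is correct and matches the paper's proof. Both arguments bound $z$ of the transformed string by a constant times $\delta \log n$, then use monotonicity of $\delta$ for prefix deletion, symmetry for reversal, and Property~\ref{prop:delta_concatenation} with $\delta \geq 1$ to get a factor of $3$ for substring deletion and rotation, and finally close with $\delta \in O(z)$. Your explicit handling of the $\log(1+|X|/\delta(X)) \le \log(1+n)$ step and of the empty-string cases only tidies the same argument.
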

\begin{proof}
Fix an integer $n \geq 2$.
The bounds on $z$ in terms of $\delta$ imply that there is a constant $\alpha > 0$ such that
\[
z(X) \leq \alpha \delta(X) \log n
\]
for every nonempty string $X$ of length at most $n$.

For prefix deletion, $T$ is a substring of $ST$, so the monotonicity of $\delta$ gives $\delta(T) \leq \delta(ST)$.
Therefore,
\[
z(T) \leq \alpha \delta(T) \log n
\leq \alpha z(ST) \log n.
\]

For substring deletion, both $A$ and $C$ are substrings of $ABC$.
The monotonicity of $\delta$ gives $\delta(A) \leq \delta(ABC)$ and $\delta(C) \leq \delta(ABC)$.
Property~\ref{prop:delta_concatenation} and $\delta(ABC) \geq 1$ give
\[
\delta(AC)
\leq \delta(A) + \delta(C) + 1
\leq 3\delta(ABC).
\]
The bounds relating $z$ and $\delta$ now give
\[
z(AC)
\leq \alpha \delta(AC) \log n
\leq 3\alpha \delta(ABC) \log n
\leq 3\alpha z(ABC) \log n.
\]

For string reversal, the symmetry of $\delta$ gives
\[
z(T^R)
\leq \alpha \delta(T^R) \log n
= \alpha \delta(T) \log n
\leq \alpha z(T) \log n.
\]

For cyclic rotation, both $S$ and $T$ are substrings of $ST$.
The monotonicity of $\delta$ gives $\delta(S) \leq \delta(ST)$ and $\delta(T) \leq \delta(ST)$.
Property~\ref{prop:delta_concatenation} and $\delta(ST) \geq 1$ give
\[
\delta(TS)
\leq \delta(S) + \delta(T) + 1
\leq 3\delta(ST).
\]
As in the substring-deletion case, the bounds relating $z$ and $\delta$ give $z(TS) \leq 3\alpha z(ST) \log n$.
All four sensitivities are therefore in $O(\log n)$.
\end{proof}
 \section{Sensitivity of the Lempel--Ziv Factorization} \label{se:lz_sensitivity}
In this section, we determine the sensitivity of the LZ factorization
to prefix deletion, substring deletion, cyclic rotation, and string reversal.

\subsection{Sensitivity to Deletions and Cyclic Rotation} \label{sse:pref_deletion}

For each integer $\alpha \geq 3$, let $m = 2^\alpha$.
We construct two strings $S$ and $T$ over the alphabet $\Sigma = \{\mathtt{a}_i \mid i \in \Z_{m}\} \cup \{\mathtt{b}\} \cup \{\mathtt{\$}_i \mid i \in \Z_{m} \setminus \{ 0 \} \}$.
For each $x \in \Z_{m} \setminus \{ 0 \}$, we define a string $R_x$ of length $m + 1$ as follows.
For each $i \in \Z_m$, define $R_x[i] = \mathtt{a}_i$ if $\pi_\alpha(i) < x$ and $R_x[i] = \mathtt{b}$ otherwise.
Finally, define $R_x[m] = \mathtt{\$}_x$.

We divide $\{R_1, \dots, R_{m-1}\}$ into $\alpha$ parts $\calG_0, \dots, \calG_{\alpha-1}$,
where $\calG_k = \{R_x \mid x \in \Z_m \setminus \{0\},\ \lsb(x) = k\}$.
These sets form a partition of $\{R_1, \dots, R_{m-1}\}$.
For each $k \in \Z_\alpha$, we order the elements $R_x$ of $\calG_k$ by increasing $x$ and denote their concatenation by $G_k$.

Using these strings, we define $S = R_1 \cdots R_{m - 1}$ and $T = G_{\alpha-1} G_{\alpha-2} \cdots G_0$.
Each occurrence of $R_i$ in $S$ or $T$ is called a \emph{block}.
Each block has length $m + 1$.
Both $S$ and $T$ consist of exactly $m - 1$ blocks, so their total length is $2(m - 1)(m + 1) = 2(4^\alpha - 1)$.

We first state the following basic observation.
\begin{observation} \label{obs:equal_substring_offsets}
Let $U$ and $V$ be equal substrings with distinct starting positions in either $S$ or $T$.
Then, the substrings $U$ and $V$ are each contained in a block.
Moreover, if they contain $\mathtt{a}_i$ for some $i \in \Z_m$, their starting positions have the same offset within their respective blocks.
\end{observation}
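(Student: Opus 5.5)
The plan is to use the fact that each terminator $\mathtt{\$}_x$ occurs exactly once in $S$ and exactly once in $T$. Every block $R_x$ ends with its own symbol $\mathtt{\$}_x$, and $S$ and $T$ are both concatenations of $R_1,\dots,R_{m-1}$, each appearing once. So each $\mathtt{\$}_x$ occurs at exactly one position.

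\textbf{Containment in a block.} Let $U = V$ occur at distinct starting positions $p \neq q$. First, $U$ cannot contain any $\mathtt{\$}_x$. If $U[j] = \mathtt{\$}_x$, then $V[j] = \mathtt{\$}_x$ as well, so $p + j$ and $q + j$ would both be occurrences of $\mathtt{\$}_x$. Uniqueness forces $p + j = q + j$, contradicting $p \neq q$. The same argument applies to $V$.

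Every block boundary is immediately preceded by a $\mathtt{\$}$ character, since each block ends with one. Hence a substring that avoids all $\mathtt{\$}$ characters cannot cross a block boundary. It therefore lies within a single block, inside the first $m$ positions of that block (the $\mathtt{a}/\mathtt{b}$ part).

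\textbf{Equal offsets.} Suppose $U$ contains $\mathtt{a}_i$ at index $j$. By the first part, $U$ starts at offset $o_U$ in some block, and the character at $p + j$ sits at offset $o_U + j$ within that block. By the definition of $R_x$, the only position in a block that can hold $\mathtt{a}_i$ is offset $i$, since $R_x[i] \in \{\mathtt{a}_i, \mathtt{b}\}$. This gives $o_U + j = i$. Because $V[j] = U[j] = \mathtt{a}_i$, the same reasoning gives $o_V + j = i$. Therefore $o_U = o_V$.

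No step here is a real obstacle. The only point needing care is that block boundaries are exactly the positions just after the $\mathtt{\$}$ characters, which holds because every block ends in a $\mathtt{\$}$ and both strings are pure concatenations of blocks.
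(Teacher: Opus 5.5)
Your proof is correct and follows essentially the same route as the paper. Uniqueness of each terminal $\mathtt{\$}_x$ in $S$ and in $T$ rules out a terminal inside equal substrings at distinct positions, hence rules out crossing a block boundary, and the fact that $\mathtt{a}_i$ can occur only at offset $i$ within a block then forces equal starting offsets; you simply spell out the uniqueness step a little more explicitly than the paper does.
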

\begin{proof}
Every substring crossing a block boundary contains a unique terminal character $\mathtt{\$}_x$ at the end of a block.
The substrings $U$ and $V$ are equal and have distinct starting positions, so neither can contain a terminal character.
Hence, $U$ and $V$ are each contained in a block.
Let $t$ be the offset of $\mathtt{a}_i$ within $U$.
The substring $V$ also contains $\mathtt{a}_i$ at offset $t$, and $\mathtt{a}_i$ can occur only at offset $i$ within a block.
Consequently, both substrings start at offset $i - t$ within their respective blocks.
\end{proof}

Figure~\ref{fig:block_orders_for_h_four} shows examples of the strings $R_x$ and their orders in $S$ and $T$.
\begin{figure}[H]
\centering
\begin{minipage}[t]{0.485\textwidth}
\centering
\vspace{0pt}
\makebox[\linewidth][l]{\hspace*{-0.10892\linewidth}\includegraphics[width=\linewidth,clip]{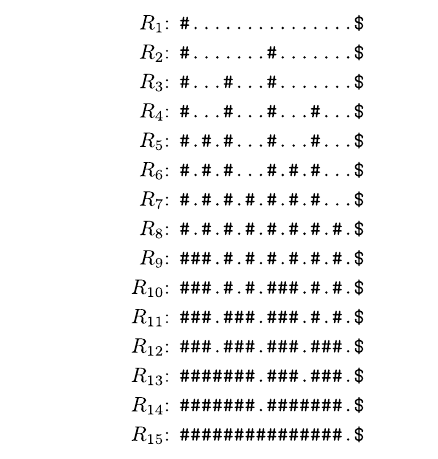}}
\end{minipage}
\hfill
\begin{minipage}[t]{0.485\textwidth}
\centering
\vspace{0pt}
\makebox[\linewidth][l]{\hspace*{-0.10892\linewidth}\includegraphics[width=\linewidth,clip]{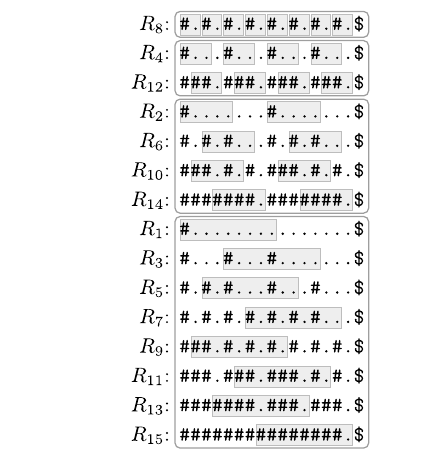}}
\end{minipage}
  \caption{
  The strings $R_x$ and their orders in $S$ and $T$ for $\alpha = 4$.
  To make the block contents easier to see, we display the character $\mathtt{a}_i$ at position $i$ of $R_x$ as \texttt{\#}, each $\mathtt{b}$ as \texttt{.}, and each terminal character $\mathtt{\$}_x$ as $\mathtt{\$}$.
  Concatenating the blocks from top to bottom in the left and right panels yields $S$ and $T$, respectively.
  In the right panel, the gray boxes mark the substrings specified in Lemma~\ref{lem:T_unique_occs}.
  The solid rounded gray boxes enclose the four groups $\calG_3, \calG_2, \calG_1$, and $\calG_0$, from top to bottom.
}
\label{fig:block_orders_for_h_four}
\end{figure}

\begin{theorem} \label{thm:z_ST}
$z(ST) \leq 5m - 5$.
\end{theorem}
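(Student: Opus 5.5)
The plan is to exhibit an explicit LZ-like factorization of $ST$ with at most $5m-5$ phrases. Since the LZ factorization has the minimum number of phrases among all LZ-like factorizations, this bounds $z(ST)$. I will spend at most $4$ phrases on each block of $S$ and exactly one phrase on each block of $T$.

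\textbf{The factorization of $S$.} The key structural fact is that consecutive blocks differ in a single position. For $x \in [1, m-2]$, the strings $R_x$ and $R_{x+1}$ agree everywhere except at position $i = \pi_\alpha^{-1}(x) = \pi_\alpha(x)$, and at their terminal characters. At position $i$ we have $R_x[i] = \mathtt{b}$, since $\pi_\alpha(i) = x \not< x$, while $R_{x+1}[i] = \mathtt{a}_i$. All other positions $j < m$ satisfy $\pi_\alpha(j) \neq x$, so the test $\pi_\alpha(j) < x$ and the test $\pi_\alpha(j) < x+1$ coincide.

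\textbf{The first block.} The block $R_1$ equals $\mathtt{a}_0\mathtt{b}^{m-1}\mathtt{\$}_1$. I factor it as $\mathtt{a}_0$, $\mathtt{b}$, $\mathtt{b}^{m-2}$, $\mathtt{\$}_1$. The third phrase is a self-overlapping copy starting one position earlier, which is allowed since $m - 2 \geq 1$. This gives $4$ phrases.

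\textbf{The later blocks of $S$.} For each $x \in [1, m-2]$, I factor $R_{x+1}$ into at most four phrases:
\begin{enumerate}
  \item a copy of $R_x[0..i-1]$ from the preceding block, which is nonempty since $i = \pi_\alpha(x) \neq 0$ for $x \neq 0$;
  \item the single character $\mathtt{a}_i$;
  \item a copy of $R_x[i+1..m-1]$, omitted if it is empty;
  \item the single character $\mathtt{\$}_{x+1}$.
\end{enumerate}
Every phrase is either a single character or a substring with an earlier occurrence, so this is valid for an LZ-like factorization. Hence $S$ is covered by at most $4(m-1)$ phrases.

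\textbf{The factorization of $T$.} Every block of $T$ is some $R_x$ with $x \in \Z_m \setminus \{0\}$, and each such block already occurs in full, including its terminal $\mathtt{\$}_x$, inside $S$. I can therefore take each block of $T$ as a single phrase copied from $S$, for $m-1$ phrases. The total is at most $4(m-1) + (m-1) = 5m-5$.

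\textbf{Main obstacle.} The only delicate point is the one-position-difference claim between $R_x$ and $R_{x+1}$. It rests on $\pi_\alpha$ being an involution, which lets me identify the changed position as $\pi_\alpha(x)$. The remaining checks are routine edge cases: an empty third phrase when $\pi_\alpha(x) = m-1$, and the handling of $R_1$.
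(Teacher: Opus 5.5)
Your proposal is correct and matches the paper's proof step for step: four phrases for $R_1$, at most four phrases for each later block of $S$ (using that $R_x$ and $R_{x+1}$ differ only at positions $\pi_\alpha(x)$ and $m$), and one copied phrase for each block of $T$, for a total of $5m-5$. As a minor remark, the empty-third-phrase case you mention never actually occurs, because $\pi_\alpha(x) = m-1$ holds only for $x = m-1$.
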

\begin{proof}
It suffices to construct an LZ-like factorization of $ST$ with at most $5m - 5$ phrases.

The first $m - 1$ blocks of $ST$ form $S = R_1 \cdots R_{m - 1}$.
The first block is $R_1 = \mathtt{a}_0 \mathtt{b}^{m - 1} \mathtt{\$}_1$.
If $m > 2$, it has the LZ-like factorization
$\mathtt{a}_0 \mid \mathtt{b} \mid \mathtt{b}^{m - 2} \mid \mathtt{\$}_1$,
where $\mathtt{b}^{m - 2}$ is copied from position $1$.
For $m = 2$, the factorization is $\mathtt{a}_0 \mid \mathtt{b} \mid \mathtt{\$}_1$.
Thus, $R_1$ has an LZ-like factorization with at most four phrases.
Since $\pi_\alpha^{-1} = \pi_\alpha$, the definition of $R_x$ shows that $R_i$ and $R_{i + 1}$ differ only at positions $\pi_\alpha(i)$ and $m$ for every $1 \leq i < m - 1$.
Hence, $R_{i + 1}$ can be factorized into at most four phrases by copying the unchanged prefix and suffix from $R_i$ and using a single-character phrase at each of the two differing positions.
Consequently, $S$ has an LZ-like factorization with at most $4 + 4(m - 2) = 4m - 4$ phrases.

The last $m - 1$ blocks form $T$, and each of these blocks already occurs in $S$.
Thus, each block of $T$ can be copied as one phrase.
Together, these factorizations form an LZ-like factorization of $ST$ with at most $5m - 5$ phrases, and hence $z(ST) \leq 5m - 5$.
\end{proof}

We use the following lemma to derive a lower bound on $z(T)$.
\begin{lemma} \label{lem:T_unique_occs}
For each $x \in \Z_m \setminus \{0\}$, let $k = \lsb(x)$ and
let $u \in \Z_{2^{\alpha-k-1}}$ be the unique integer satisfying $\bin_\alpha(x) = \bin_{\alpha-k-1}(u)\mathtt{1}\mathtt{0}^k$.
For every $i \in \Z_{2^k}$, the substring of length $2^{\alpha-k-1} + 1$ that starts at position $i2^{\alpha-k} + \pi_{\alpha-k-1}(u)$ in $R_x$ does not occur at any other position in any string belonging to $\bigcup_{j=k}^{\alpha-1}\calG_j$.
\end{lemma}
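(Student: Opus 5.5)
The plan is to reduce the uniqueness claim to a comparison of two single characters, the first and last of the window. Observation~\ref{obs:equal_substring_offsets} pins any other occurrence to the same offset in another block, and the bit-reversal structure of $R_x$ then forces that block to be $R_x$ itself. Write $L = 2^{\alpha-k-1}+1$ and $p_0 = i2^{\alpha-k} + \pi_{\alpha-k-1}(u)$. Let $W$ be the window of $R_x$ starting at $p_0$ with length $L$.

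\emph{Step 1: the window's position and endpoints.} Since $\pi_{\alpha-k-1}(u) < 2^{\alpha-k-1}$, every position $p$ of $W$ satisfies
\[
p = i2^{\alpha-k} + q, \qquad \pi_{\alpha-k-1}(u) \leq q \leq \pi_{\alpha-k-1}(u) + 2^{\alpha-k-1} < 2^{\alpha-k}.
\]
Hence $p_0 + L - 1 \leq m-1$, so $W$ contains no terminal character and lies inside the first $m$ positions of $R_x$.

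Next I split $\bin_\alpha(p)$ into its high $k$ bits (which encode $i$) and its low $\alpha-k$ bits (which encode $q$). Reversing all $\alpha$ bits gives
\[
\pi_\alpha(p) = \pi_{\alpha-k}(q)\,2^k + \pi_k(i).
\]
The first position has $q = \pi_{\alpha-k-1}(u) < 2^{\alpha-k-1}$, so $\pi_{\alpha-k}(q) = 2u$. The last position has $q = \pi_{\alpha-k-1}(u) + 2^{\alpha-k-1}$, so $\pi_{\alpha-k}(q) = 2u+1$. Therefore
\[
\pi_\alpha(p_0) = 2u\,2^k + \pi_k(i), \qquad \pi_\alpha(p_0+L-1) = (2u+1)\,2^k + \pi_k(i).
\]
Since $x = (2u+1)2^k$ and $\pi_k(i) < 2^k$, we get $\pi_\alpha(p_0) < x \leq \pi_\alpha(p_0+L-1)$. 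So $W$ begins with $\mathtt{a}_{p_0}$ and ends with $\mathtt{b}$.

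\emph{Step 2: pinning down another occurrence.} Every string in $\bigcup_{j=k}^{\alpha-1}\calG_j$ is a block or a concatenation of blocks, i.e.\ a substring of $T$. Suppose $W$ occurred at a different position in such a string. Then Observation~\ref{obs:equal_substring_offsets} applies, since $W$ contains $\mathtt{a}_{p_0}$. It shows the other occurrence lies inside a single block $R_{x'}$ with $\lsb(x') \geq k$, at the same offset $p_0$.

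Because the two occurrences start at distinct positions but the same offset, they lie in different blocks. All blocks in the union are distinct $R$'s, so $x' \neq x$.

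\emph{Step 3: the contradiction.} Agreement of $R_{x'}$ with $W$ at the first and last positions requires
\[
\pi_\alpha(p_0) < x' \leq \pi_\alpha(p_0+L-1),
\]
that is, $x'$ lies in the half-open interval $\bigl(2u2^k + \pi_k(i),\ (2u+1)2^k + \pi_k(i)\bigr]$. This interval has length $2^k$, so it contains exactly one multiple of $2^k$, namely $(2u+1)2^k = x$. Since $\lsb(x') \geq k$, $x'$ is a multiple of $2^k$, forcing $x' = x$. This contradicts Step 2.

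The main obstacle is Step 1: the bit bookkeeping identifying $\pi_\alpha$ on the window's endpoints. In particular one must check that adding $2^{\alpha-k-1}$ to $q$ flips exactly the bit that becomes the lowest bit of $\pi_{\alpha-k}(q)$, and that the window does not spill into the next length-$2^{\alpha-k}$ segment. Once these are verified, the rest is the counting argument on multiples of $2^k$.
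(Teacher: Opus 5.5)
Your proof is correct and is essentially the paper's argument: you compute $\pi_\alpha$ at the window's endpoints to get $\pi_\alpha(p_0) < x \leq \pi_\alpha(p_0+L-1)$, use Observation~\ref{obs:equal_substring_offsets} to pin any other occurrence to offset $p_0$ in some $R_{x'}$ with $\lsb(x') \geq k$, and conclude $x' = x$ because the length-$2^k$ interval contains a unique multiple of $2^k$. One cosmetic point: each element of $\calG_j$ is a single block $R_y$, not a concatenation of blocks, so your Step 2 becomes slightly simpler.
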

\begin{proof}
Fix $i \in \Z_{2^k}$.
Let $d = \alpha - k - 1$, $v = \pi_d(u)$, $l = i2^{d+1} + v$, and $r = l + 2^d$.
The substring in the statement is $R_x[l..r]$.

We first show that $R_x[l] = \mathtt{a}_{l}$ and $R_x[r] = \mathtt{b}$.
By the definition of $R_x$, these equalities are equivalent to $\pi_\alpha(l) < x \leq \pi_\alpha(r)$.
Since $\bin_d(v) = \bin_d(u)^R$, the binary representations of $l$ and $r$ are
$\bin_\alpha(l) = \bin_k(i)\mathtt{0}\bin_d(v)$ and
$\bin_\alpha(r) = \bin_k(i)\mathtt{1}\bin_d(v)$, respectively.
The bit-reversal permutation $\pi_\alpha$ therefore gives
$\pi_\alpha(l) = u2^{k+1} + \pi_k(i) = x - 2^k + \pi_k(i)$ and
$\pi_\alpha(r) = u2^{k+1} + 2^k + \pi_k(i) = x + \pi_k(i)$.
Since $0 \leq \pi_k(i) < 2^k$, we obtain $\pi_\alpha(l) < x \leq \pi_\alpha(r)$.

Suppose that $R_x[l..r]$ occurs at position $p$ in some
$R_y \in \bigcup_{j=k}^{\alpha-1}\calG_j$.
By Observation~\ref{obs:equal_substring_offsets}, this occurrence starts at offset $l$, so $p = l$.
We therefore have $R_y[l] = \mathtt{a}_{l}$ and $R_y[r] = \mathtt{b}$.
By the definition of $R_y$, these equalities imply
$y \in [\pi_\alpha(l) + 1, \pi_\alpha(r)]$.
Moreover, $\lsb(y) \geq k$, so $y$ is divisible by $2^k$.
The interval $[\pi_\alpha(l) + 1, \pi_\alpha(r)]$ contains $2^k$ consecutive integers, including $x$.
Since $x$ is divisible by $2^k$, it is the unique multiple of $2^k$ in this interval.
Therefore, $y = x$.
Since $p = l$ and $y = x$, the occurrence is $R_x[l..r]$ itself.
Therefore, this substring has no other occurrence in any block belonging to
$\bigcup_{j=k}^{\alpha-1}\calG_j$.
\end{proof}

\begin{theorem} \label{thm:z_T}
$z(T) \geq m \alpha/2 + 1$.
\end{theorem}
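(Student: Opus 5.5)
Every phrase boundary of the LZ factorization of $T$ lies either at a block boundary or strictly inside a block. Observation~\ref{obs:equal_substring_offsets} tells us that a copied phrase never crosses a block boundary. So the phrases of the LZ factorization of $T$ (viewed as any LZ-like factorization) refine the block structure, and we may count phrases block by block. The plan is to charge, for each block $R_x$, at least one phrase boundary to each of the $2^k$ unique substrings given by Lemma~\ref{lem:T_unique_occs}, where $k = \lsb(x)$. We also charge one phrase to each terminal $\mathtt{\$}_x$, and use disjointness of the substrings to add everything up.

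Fix a block $R_x \in \calG_k$ in $T$. All blocks preceding it in $T$ belong to $\calG_{\alpha-1}, \dots, \calG_{k}$, as do the earlier parts of $R_x$ itself. Let $W_i$ be the substring of length $2^{\alpha-k-1}+1$ starting at position $i2^{\alpha-k} + \pi_{\alpha-k-1}(u)$ of $R_x$, for $i \in \Z_{2^k}$. By Lemma~\ref{lem:T_unique_occs}, $W_i$ has no other occurrence in any block of $\bigcup_{j \ge k}\calG_j$. Hence no phrase can contain $W_i$ entirely while being a copy: its source would be an earlier occurrence of $W_i$, lying within a block (Observation~\ref{obs:equal_substring_offsets}) that precedes it in $T$ or lies earlier in the same block, contradicting uniqueness. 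A single-character phrase also cannot contain $W_i$, since $|W_i| \ge 2$. Therefore some phrase starts at a position of $W_i$ other than its first position, i.e.\ strictly inside $W_i$.

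The windows $W_i$ for $i \in \Z_{2^k}$ start $2^{\alpha-k}$ apart and have length $2^{\alpha-k-1}+1 \le 2^{\alpha-k}$. They are therefore pairwise disjoint, and their interiors give $2^k$ distinct phrase starts inside $R_x$, all before position $m$. In addition, $\mathtt{\$}_x$ is a fresh character, so it forms its own phrase. This gives at least $2^k+1$ phrases whose starts lie in block $R_x$.

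Summing over the blocks, $|\calG_k| = 2^{\alpha-k-1}$ for each $k$, so $\calG_k$ contributes at least $2^{\alpha-k-1}(2^k+1) \ge m/2$ phrases. Summing over $k \in \Z_\alpha$ yields $m\alpha/2$ plus the terminal contributions $\sum_k 2^{\alpha-k-1} = m-1$. This exceeds $m\alpha/2+1$ once $m \ge 3$, and the very first phrase $\mathtt{a}$ or $\mathtt{b}$ is already counted.

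The main obstacle is justifying the ``no phrase contains $W_i$'' step when the source overlaps the phrase within the same block. Observation~\ref{obs:equal_substring_offsets} handles this: an equal occurrence at a distinct start must have the same offset if it contains some $\mathtt{a}_j$, and $W_i$ begins with $\mathtt{a}_l$. So the source copy of $W_i$ would sit at offset $l$ in a different block, which must be an earlier block of $T$. Such a block lies in $\bigcup_{j\ge k}\calG_j$, and the lemma rules this out. One should also double-check that an earlier block with the same $x$ cannot occur in $T$ (each $R_x$ appears once), so the lemma applies.
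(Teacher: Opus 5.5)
Your proposal is correct and follows the paper's proof: the same pairwise disjoint windows from Lemma~\ref{lem:T_unique_occs}, the same use of Observation~\ref{obs:equal_substring_offsets} to show that no copy phrase can cover a window (so each window contains a phrase boundary), and the same count $\sum_{x} 2^{\lsb(x)} = m\alpha/2$. The only difference is that you also charge one phrase to each terminal $\mathtt{\$}_x$, which gives the slightly stronger bound $m\alpha/2 + m - 1$ in place of the paper's ``$+1$''. Your remark that the very first phrase is ``already counted'' is inaccurate, since position $0$ is neither a window interior nor a terminal, but this does not matter given the surplus.
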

\begin{proof}
For each $x \in \Z_m \setminus \{0\}$, let $s_x$ be the starting position of the block $R_x$ in $T$.
Let $k = \lsb(x)$, and let $u$ be the integer specified in Lemma~\ref{lem:T_unique_occs}.
For each $i \in \Z_{2^k}$, define
$a_{x,i} = i2^{\alpha-k} + \pi_{\alpha-k-1}(u)$
and
$I_{x,i} = [s_x + a_{x,i}, s_x + a_{x,i} + 2^{\alpha-k-1}]$.
The interval $I_{x,i}$ consists of the positions in $T$ corresponding to the occurrence considered in Lemma~\ref{lem:T_unique_occs}.
Let $\mathcal{I}$ be the set of intervals $I_{x,i}$ for all $x \in \Z_m \setminus \{0\}$ and $i \in \Z_{2^{\lsb(x)}}$.
For a fixed $x$, the starting positions of successive intervals differ by $2^{\alpha-k}$, whereas each interval contains $2^{\alpha-k-1} + 1 \leq 2^{\alpha-k}$ positions.
Hence, these intervals are pairwise disjoint.
Since intervals taken from different blocks are also disjoint, all intervals in $\mathcal{I}$ are pairwise disjoint.

Fix $x \in \Z_m \setminus \{0\}$, let $k = \lsb(x)$, and write $I_{x,i} = [l, r]$.
The order $G_{\alpha-1}, G_{\alpha-2}, \dots, G_0$ ensures that every block preceding $R_x$ in $T$ belongs to $\bigcup_{j=k}^{\alpha-1} \calG_j$.
By Observation~\ref{obs:equal_substring_offsets}, any earlier occurrence of $T[l..r]$ would be contained in one of these blocks, but no such occurrence exists by Lemma~\ref{lem:T_unique_occs}.
Each interval in $\mathcal{I}$ contains at least two positions.
If an interval were contained in a single phrase of an LZ-like factorization of $T$, that phrase would be a copy phrase, and its source would give an earlier occurrence of the corresponding substring.
Thus, every interval in $\mathcal{I}$ must contain a phrase boundary.

For each $k \in \Z_\alpha$, exactly $2^{\alpha-k-1}$ integers $x \in \Z_m \setminus \{0\}$ satisfy $\lsb(x) = k$, and each corresponding block contributes $2^k$ intervals.
Hence,
\[
|\mathcal{I}|
= \sum_{x=1}^{m-1} 2^{\lsb(x)}
= \sum_{k=0}^{\alpha-1} 2^{\alpha-k-1} 2^k
= \alpha 2^{\alpha-1}
= \frac{m \alpha}{2}.
\]
Since every interval in $\mathcal{I}$ must contain a phrase boundary, $z(T) \geq |\mathcal{I}| + 1 = m \alpha/2 + 1$.
\end{proof}

The preceding bounds give the following lower bound on the sensitivity to prefix deletion.
\begin{theorem} \label{thm:truncate_sensitivity_lowerbound}
For every sufficiently large positive integer $n$, there exists a pair of strings $S$ and $T$ such that $|ST| = n$ and $z(T) / z(ST) \in \Omega(\log n)$.
\end{theorem}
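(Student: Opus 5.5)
The plan is to combine Theorem~\ref{thm:z_ST} and Theorem~\ref{thm:z_T} for the construction with parameter $\alpha$, and then handle lengths $n$ that are not of the form $2(4^\alpha - 1)$ by padding.

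\textbf{Special lengths.} For $n = 2(4^\alpha-1)$ we have $m = 2^\alpha$ and $n < 2m^2$, so $\alpha = \log_2 m \geq \tfrac12\log_2(n/2)$. The two theorems then give
\[
\frac{z(T)}{z(ST)} \geq \frac{m\alpha/2+1}{5m-5} \geq \frac{\alpha}{10},
\]
which is $\Omega(\log n)$.

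\textbf{General lengths.} For arbitrary sufficiently large $n$, choose the largest $\alpha \geq 3$ with $N_\alpha := 2(4^\alpha-1) \leq n$. Then $n < N_{\alpha+1} \leq 4N_\alpha + 6$, so $\alpha = \Theta(\log n)$. Set $S' = \mathtt{b}^{\,n-N_\alpha} S$ and keep $T$, so that $|S'T| = n$.

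\textbf{Upper bound on $z(S'T)$.} Build an LZ-like factorization of $S'T$:
\begin{itemize}
\item the padding $\mathtt{b}^{n-N_\alpha}$ costs at most two phrases (one $\mathtt{b}$, then a self-overlapping copy);
\item the factorization of $ST$ from the proof of Theorem~\ref{thm:z_ST} carries over unchanged, because every copy phrase there still has an earlier source at the same relative position after the shift.
\end{itemize}
This gives $z(S'T) \leq 5m-3$.

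\textbf{Lower bound on $z(T)$.} Theorem~\ref{thm:z_T} still applies, since $T$ is unchanged. Hence
\[
\frac{z(T)}{z(S'T)} \geq \frac{m\alpha/2}{5m-3} = \Omega(\alpha) = \Omega(\log n).
\]
Alternatively, the padding could be appended to $T$ as $T\mathtt{b}^{\,n-N_\alpha}$, using monotonicity $z(X) \leq z(XY)$ for the lower bound; either choice works.

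\textbf{Main obstacle.} The only point needing care is justifying that prepending the padding does not break the LZ-like factorization of $ST$. It is enough to note that the LZ-like property only needs each source to start before its phrase, and this is preserved under a uniform shift. The rest is arithmetic relating $\alpha$ to $\log n$.
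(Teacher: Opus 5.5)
Your proof is correct and matches the paper's argument: you combine Theorems~\ref{thm:z_ST} and~\ref{thm:z_T} for the largest admissible $\alpha$, pad to reach length exactly $n$, and obtain the same $\alpha/10$ bound. The only difference is where the padding goes. The paper appends a fresh character, $T' = T\#^{\ell}$, and uses $z(T) \leq z(T')$ together with $z(ST') \leq z(ST) + 2$. You instead prepend $\mathtt{b}^{\ell}$ to the deleted prefix, which leaves $T$ untouched and only requires the shifted factorization of $ST$ to remain LZ-like.
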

\begin{proof}
For each integer $\alpha \geq 3$, define $n_\alpha = 2(4^\alpha - 1)$.
The strings $S$ and $T$ constructed above satisfy $|ST| = 2(m - 1)(m + 1) = n_\alpha$.
Moreover, $n_\alpha \in \Theta(4^\alpha)$.
Given a sufficiently large positive integer $n$, let $\alpha$ be the largest integer satisfying $n_\alpha \leq n$.
Let $S$ and $T$ be the strings constructed above for this value of $\alpha$.
Let $\ell = n - n_\alpha$, and let $\#$ be a character that occurs in neither $S$ nor $T$.
Define $T' = T\#^\ell$, so that $|ST'| = n$.
Then, we have $z(T) \leq z(T') \leq z(T) + 2$ and $z(ST) \leq z(ST') \leq z(ST) + 2$.
The maximality of $\alpha$ gives $n_\alpha \in \Theta(n)$, and hence $\log n_\alpha \in \Theta(\log n)$.
Theorems~\ref{thm:z_ST} and~\ref{thm:z_T} now give
\[
    \frac{z(T')}{z(ST')}
    \geq \frac{z(T')}{z(ST) + 2}
    \geq \frac{m \alpha/2 + 1}{5m - 3}
    \geq \frac{\alpha}{10}.
\]
The relations $n_\alpha \in \Theta(4^\alpha)$ and $\log n_\alpha \in \Theta(\log n)$ imply that $\alpha \in \Theta(\log n)$.
Therefore, $z(T') / z(ST') \in \Omega(\log n)$.
Thus, the pair of strings $S$ and $T'$ satisfies the conditions of the theorem.
\end{proof}

Theorems~\ref{thm:upper_bounds} and~\ref{thm:truncate_sensitivity_lowerbound} give the following result.
\begin{corollary}[Sensitivity to Prefix Deletion]
$\DPre_z(n) \in \Theta(\log n)$.
\end{corollary}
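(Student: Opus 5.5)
The corollary is an immediate combination of two results already in place. The upper bound $\DPre_z(n) \in O(\log n)$ is the prefix-deletion case of Theorem~\ref{thm:upper_bounds}. It comes from $z(T) \leq \alpha\,\delta(T)\log n \leq \alpha\,\delta(ST)\log n \leq \alpha\, z(ST)\log n$, using the monotonicity of $\delta$ and $\delta \in O(z)$. So only the matching lower bound needs to be assembled.

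For the lower bound, I would invoke Theorem~\ref{thm:truncate_sensitivity_lowerbound}. It states that for every sufficiently large $n$ there are strings $S$ and $T'$ with $|ST'| = n$ and $z(T')/z(ST') \geq \alpha/10$, where $\alpha \in \Theta(\log n)$. Since $\DPre_z(n)$ is the maximum of $z(T)/z(ST)$ over all pairs with $|ST| = n$, this particular pair already witnesses $\DPre_z(n) \geq \alpha/10 \in \Omega(\log n)$ for all large $n$. The finitely many small $n$ do not matter asymptotically.

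The only point to check is that the padding in Theorem~\ref{thm:truncate_sensitivity_lowerbound} makes the construction cover every length $n$, not just the lengths $n_\alpha$. This is what lets the $\Omega$ bound hold for all $n$ rather than along a subsequence. That theorem handles it by padding with $\#^\ell$, which changes each of $z(T)$ and $z(ST)$ by at most $2$, and by choosing $\alpha$ maximal so that $n_\alpha \in \Theta(n)$. Combining the two bounds gives $\DPre_z(n) \in \Theta(\log n)$. I do not expect any genuine obstacle here.
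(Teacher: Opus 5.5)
Your proposal is correct and matches the paper's approach. The paper likewise obtains the corollary directly by combining the prefix-deletion case of Theorem~\ref{thm:upper_bounds} (the $O(\log n)$ upper bound) with the padded lower-bound construction of Theorem~\ref{thm:truncate_sensitivity_lowerbound}, which gives $z(T')/z(ST') \geq \alpha/10$ with $\alpha \in \Theta(\log n)$ for every sufficiently large $n$.
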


The lower bound for prefix deletion also applies to cyclic rotation and substring deletion.
\begin{corollary}[Sensitivity to Cyclic Rotation]
$\Rot_z(n) \in \Theta(\log n)$.
\end{corollary}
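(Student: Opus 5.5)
The plan is to obtain the lower bound on $\Rot_z(n)$ by reusing the prefix-deletion family, and to take the upper bound directly from Theorem~\ref{thm:upper_bounds}. The natural candidate is to rotate $ST$ into $TS$. Since $z$ is monotone under appending a suffix, $z(TS) \geq z(T)$. Theorem~\ref{thm:z_T} then gives $z(TS) \geq m\alpha/2 + 1$, and Theorem~\ref{thm:z_ST} gives $z(ST) \leq 5m-5$. The ratio is therefore at least $\alpha/10 \in \Omega(\log n)$ for $n = n_\alpha$.

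For an arbitrary sufficiently large $n$, I would pad as in Theorem~\ref{thm:truncate_sensitivity_lowerbound}. Take $\alpha$ maximal with $n_\alpha \leq n$, let $\ell = n - n_\alpha$, and use a fresh character $\#$. To keep both sides clean, I would put the padding at the end of $T$ and rotate $S\cdot(T\#^\ell)$ into $(T\#^\ell)S$.

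The original string is then $ST\#^\ell$. Appending the run $\#^\ell$ adds at most $2$ phrases, namely a single $\#$ and one self-overlapping copy, so $z(ST\#^\ell) \leq 5m - 3$. The rotated string $T\#^\ell S$ has $T$ as a prefix, so by suffix monotonicity $z(T\#^\ell S) \geq z(T) \geq m\alpha/2 + 1$. This gives the same bound $\geq \alpha/10$ as before. Since $n_\alpha \in \Theta(n)$, we have $\alpha \in \Theta(\log n)$, which yields $\Rot_z(n) \in \Omega(\log n)$. Combined with Theorem~\ref{thm:upper_bounds}, this gives $\Theta(\log n)$.

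The only step needing care is where the padding goes. It must be placed so that the rotated string keeps $T$ as a prefix, which lets suffix monotonicity apply. It must also keep the unrotated string's factorization cheap, adding only $O(1)$ phrases. Placing $\#^\ell$ at the end of $T$ satisfies both requirements, so I expect no real obstacle.
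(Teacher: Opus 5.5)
Your proposal is correct and follows essentially the same route as the paper. The paper also rotates $S\cdot T'$ with $T' = T\#^\ell$ into $T'S$, uses suffix monotonicity of $z$ to reduce to the ratio already bounded in Theorem~\ref{thm:truncate_sensitivity_lowerbound} (you simply redo that arithmetic inline), and takes the upper bound from Theorem~\ref{thm:upper_bounds}.
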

\begin{proof}
For every sufficiently large positive integer $n$, let $S$ and $T'$ be the strings given by Theorem~\ref{thm:truncate_sensitivity_lowerbound}.
These strings satisfy $|ST'| = n$.
The monotonicity of $z$ under appending a suffix gives $z(T') \leq z(T'S)$.
The definition of $\Rot_z(n)$ now gives
\[
    \Rot_z(n)
    \geq \frac{z(T'S)}{z(ST')}
    \geq \frac{z(T')}{z(ST')}.
\]
Theorem~\ref{thm:truncate_sensitivity_lowerbound} gives $z(T') / z(ST') \in \Omega(\log n)$, so $\Rot_z(n) \in \Omega(\log n)$.
The upper bound in Theorem~\ref{thm:upper_bounds} now gives $\Rot_z(n) \in \Theta(\log n)$.
\end{proof}

\begin{corollary}[Sensitivity to Substring Deletion]
$\DSub_z(n) \in \Theta(\log n)$.
\end{corollary}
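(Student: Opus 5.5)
The plan is to realize prefix deletion as a special case of substring deletion, taking the prefix $A$ of the substring-deletion template to be a single fresh character. The upper bound $\DSub_z(n) \in O(\log n)$ is already given by Theorem~\ref{thm:upper_bounds}, so only the lower bound remains.

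For a sufficiently large $n$, apply Theorem~\ref{thm:truncate_sensitivity_lowerbound} with length $n-1$. This yields strings $S$ and $T'$ with $|ST'| = n-1$ and $z(T')/z(ST') \in \Omega(\log(n-1)) = \Omega(\log n)$. Let $\mathtt{c}$ be a character that occurs in neither $S$ nor $T'$. We set $A = \mathtt{c}$, $B = S$ and $C = T'$, so that $|ABC| = n$, $AC = \mathtt{c}T'$ and $ABC = \mathtt{c}ST'$. (One could take $A = \varepsilon$, but then the result would simply be the prefix-deletion bound. Using a nonempty $A$ shows that the deleted substring need not be a prefix, and it costs only an additive constant.)

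Next I compare the LZ factorizations. Since $\mathtt{c}$ is fresh, it forms its own first phrase, and no later phrase can use a source starting at position $0$: any such occurrence would contain $\mathtt{c}$, which does not occur later. After the first phrase, every subsequent phrase is therefore determined exactly as in the LZ factorization of the string without the leading $\mathtt{c}$, because the sets of earlier occurrences correspond under a shift by one. This gives $z(\mathtt{c}X) = z(X) + 1$ for every string $X$ not containing $\mathtt{c}$.

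Applying this to both strings gives
\[
\frac{z(AC)}{z(ABC)} = \frac{z(T') + 1}{z(ST') + 1} \geq \frac{z(T')}{2z(ST')} \in \Omega(\log n),
\]
where the middle inequality uses $z(ST') \geq 1$. Hence $\DSub_z(n) \in \Omega(\log n)$, and combined with Theorem~\ref{thm:upper_bounds} this yields $\DSub_z(n) \in \Theta(\log n)$.

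The only step that needs care is the identity $z(\mathtt{c}X) = z(X) + 1$, specifically the shift-invariance of the greedy phrase choices. Because $\mathtt{c}$ is unique, every source that can be used after the first phrase lies inside the shifted copy of $X$, so the greedy choices coincide with those for $X$.
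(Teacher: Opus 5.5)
Your proof is correct and follows the paper's route: reduce to the prefix-deletion bound of Theorem~\ref{thm:truncate_sensitivity_lowerbound}, with the upper bound from Theorem~\ref{thm:upper_bounds}. The paper simply takes $A=\varepsilon$, $B=S$, $C=T'$, which the definition of $\DSub_z$ allows, so your fresh-character prefix and the (correct) identity $z(\mathtt{c}X)=z(X)+1$ are not needed, though they do show the bound persists when the deleted substring is required to be strictly interior, at the cost of only a factor of $2$.
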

\begin{proof}
For every sufficiently large positive integer $n$, let $S$ and $T'$ be the strings given by Theorem~\ref{thm:truncate_sensitivity_lowerbound}.
Set $A = \varepsilon$, $B = S$, and $C = T'$.
Since $|ABC| = |ST'| = n$, the definition of $\DSub_z(n)$ gives
\[
    \DSub_z(n)
    \geq \frac{z(AC)}{z(ABC)}
    = \frac{z(T')}{z(ST')}.
\]
By Theorem~\ref{thm:truncate_sensitivity_lowerbound}, $z(T') / z(ST') \in \Omega(\log n)$, so the inequality gives $\DSub_z(n) \in \Omega(\log n)$.
The matching upper bound follows from Theorem~\ref{thm:upper_bounds}, and hence $\DSub_z(n) \in \Theta(\log n)$.
\end{proof}

\subsection{Sensitivity to String Reversal}
We use the same string $T$ to prove a lower bound on the sensitivity of the LZ factorization to string reversal.
We first give an upper bound on $z(T^R)$.
\begin{theorem} \label{thm:z_T_reverse}
$z(T^R) \leq 6m - 9$.
\end{theorem}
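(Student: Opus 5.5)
The plan is to build an explicit LZ-like factorization of $T^R$ with at most $6m-9$ phrases. Since $z$ is the minimum number of phrases over all LZ-like factorizations, this gives the bound. Because $T = G_{\alpha-1}\cdots G_0$, we have $T^R = G_0^R G_1^R \cdots G_{\alpha-1}^R$. Each $G_k^R$ is the concatenation of the reversed blocks $R_x^R$ with $\lsb(x)=k$, taken in decreasing order of $x$. Each reversed block has the form $R_x^R = \mathtt{\$}_x \cdot (R_x[0..m-1])^R$. So $T^R$ first lists all odd $x$ in the order $m-1, m-3, \dots, 1$, and then all even $x$.

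The basic tool is the same one used in the proof of Theorem~\ref{thm:z_ST}. Because $\pi_\alpha = \pi_\alpha^{-1}$, the content parts $R_x[0..m-1]$ and $R_y[0..m-1]$ with $y<x$ differ exactly at the positions $\pi_\alpha(t)$ for $t \in [y, x-1]$. The same holds for the reversed contents, at mirrored offsets. I will factorize the blocks in three groups.

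\emph{The first block.} $R_{m-1}^R$ is factorized literally. Since $\pi_\alpha(m-1) = m-1$, its content is $\mathtt{a}_0\cdots\mathtt{a}_{m-2}\mathtt{b}$. Its reversal therefore consists of $\mathtt{\$}_{m-1}$ followed by $m$ distinct characters, which costs $m+1$ single-character phrases. (This block is unavoidably expensive, but each of these characters must appear literally at some point anyway.)

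\emph{The remaining odd blocks.} These are the other $m/2-1$ blocks of $G_0^R$. Each $R_{x-2}^R$ immediately follows $R_x^R$, and their contents differ only at the two positions $\pi_\alpha(x-2)$ and $\pi_\alpha(x-1)$. I factorize $R_{x-2}^R$ as follows: a literal $\mathtt{\$}_{x-2}$, then alternately copies of unchanged stretches from the content of $R_x^R$ and single-character phrases at the two differing positions. The character needed at a differing position is $\mathtt{b}$ or some $\mathtt{a}_i$, and all of these already occurred in the first block. This gives at most $1+3+2=6$ phrases per block, for a total of $6(m/2-1) = 3m-6$. Empty copy pieces, for example when a differing position lies at an end of the content, only reduce the count.

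\emph{The even blocks.} There are $m/2-1$ of these, and they all come after $G_0^R$. For each even $y$, the odd block $R_{y+1}^R$ appears earlier in $T^R$, and its content differs from that of $R_y$ at the single position $\pi_\alpha(y)$. So $R_y^R$ is factorized as $\mathtt{\$}_y$, a copy, one character, and a copy. That is at most $4$ phrases per block, for a total of $2m-4$. This is the key point: the source is not the preceding block but an odd-indexed neighbour that has already appeared in $G_0^R$.

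Summing the three groups gives $(m+1)+(3m-6)+(2m-4)=6m-9$, as required. The main thing to check carefully is the mismatch sets. First, for odd $x$, the two changed offsets are exactly $\pi_\alpha(x-2)$ and $\pi_\alpha(x-1)$. Second, for even $y$, only $\pi_\alpha(y)$ changes relative to $R_{y+1}$. Third, each copy phrase has a legitimate source that starts earlier in $T^R$. Copies must stay inside the content part of the source block and must not run into a $\mathtt{\$}$ character. This holds automatically, because each copied stretch is an aligned interval of the content.
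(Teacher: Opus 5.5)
Your proposal is correct and follows essentially the same argument as the paper. The paper also factorizes $(R_{m-1})^R$ literally with $m+1$ phrases, uses $(R_{x+2})^R$ as the source for odd $x$ (at most six phrases) and $(R_{x+1})^R$ for even $x$ (at most four phrases), and sums to $(m+1)+6(m/2-1)+4(m/2-1)=6m-9$.
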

\begin{proof}
It suffices to construct an LZ-like factorization of $T^R$ with at most $6m - 9$ phrases.
The first block of $T^R$ is $(R_{m - 1})^R$, which can be factorized into $m + 1$ single-character phrases.

We show that $(R_x)^R$ can be factorized into a constant number of phrases for every $1 \leq x < m - 1$.
Fix an integer $x$ with $1 \leq x < m - 1$.
If $x$ is even, let $y = x + 1$.
In this case, $\lsb(y) = 0 < \lsb(x)$, so the order $G_{\alpha-1}, G_{\alpha-2}, \dots, G_0$ places $R_x$ before $R_y$.
If $x$ is odd, let $y = x + 2$.
We then have $\lsb(y) = \lsb(x) = 0$ and $x < y$, so the definition of $G_0$ again places $R_x$ before $R_y$.
In both cases, $y \in \Z_m \setminus \{0\}$ and the block $R_x$ precedes $R_y$ in $T$.
Equivalently, $(R_y)^R$ precedes $(R_x)^R$ in $T^R$.

For each $i \in \Z_m$, $R_x[i] \ne R_y[i]$ if and only if $x \leq \pi_\alpha(i) < y$.
Since $\pi_\alpha$ is a permutation, exactly $y - x$ indices satisfy this condition.
The definition of $y$ gives $y - x = 1$ if $x$ is even and $y - x = 2$ if $x$ is odd.
The last characters $R_x[m] = \mathtt{\$}_x$ and $R_y[m] = \mathtt{\$}_y$ also differ.
Therefore, the part of $(R_x)^R$ following $\mathtt{\$}_x$ can be factorized into at most three phrases if $x$ is even and at most five phrases if $x$ is odd, using $(R_y)^R$ as the source for the unchanged substrings.
The character $\mathtt{\$}_x$ is one additional phrase, so $(R_x)^R$ can be factorized into at most four phrases if $x$ is even and at most six phrases if $x$ is odd.

There are $m/2 - 1$ even integers and $m/2 - 1$ odd integers $x$ satisfying $1 \leq x < m - 1$.
Thus, the total number of phrases in the resulting LZ-like factorization of $T^R$ is at most
\[
(m + 1) + 4\left(\frac{m}{2} - 1\right)
+ 6\left(\frac{m}{2} - 1\right)
= 6m - 9.
\]
Hence, $z(T^R) \leq 6m - 9$.
\end{proof}

The preceding bounds give the following lower bound on the sensitivity to string reversal.
\begin{theorem} \label{thm:reversal_sensitivity_lower_bound}
For every sufficiently large positive integer $n$, there exists a string $X$ of length $n$ such that $z(X^R) / z(X) \in \Omega(\log n)$.
\end{theorem}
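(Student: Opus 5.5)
The plan is to take $X = T^R$, where $T$ is the string constructed for the given $\alpha$, and pad it to length exactly $n$, mirroring the proof of Theorem~\ref{thm:truncate_sensitivity_lowerbound}. Since $(T^R)^R = T$, Theorems~\ref{thm:z_T} and~\ref{thm:z_T_reverse} give
\[
\frac{z((T^R)^R)}{z(T^R)} = \frac{z(T)}{z(T^R)} \geq \frac{m\alpha/2 + 1}{6m - 9} \geq \frac{\alpha}{12}.
\]
The length of $T$ is $(m-1)(m+1) = 4^\alpha - 1$, so $\alpha \in \Theta(\log |T|)$.

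To reach an arbitrary sufficiently large $n$, define $n_\alpha = 4^\alpha - 1$ and let $\alpha \geq 3$ be the largest integer with $n_\alpha \leq n$. Then $n < n_{\alpha+1} \leq 4n_\alpha + 3$, so $n_\alpha \in \Theta(n)$ and hence $\alpha \in \Theta(\log n)$.

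Let $\ell = n - n_\alpha$ and let $\#$ be a fresh character. The padding must be placed so that it behaves well under both orientations. I would set $X = \#^\ell T^R$, so that $X^R = T\#^\ell$.

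For the numerator, appending $\#^\ell$ to $T$ can only help the bound: by monotonicity under appending a suffix, $z(X^R) = z(T\#^\ell) \geq z(T)$.

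For the denominator, the prefix $\#^\ell$ costs at most two phrases, namely $\#$ followed by a self-overlapping copy. The remainder $T^R$ uses no $\#$, so the LZ-like factorization of $T^R$ from Theorem~\ref{thm:z_T_reverse} remains valid after shifting every source position by $\ell$. Hence $z(X) \leq z(T^R) + 2 \leq 6m - 7$. Since the LZ factorization is minimum among LZ-like factorizations, this bound is legitimate.

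Combining the two sides gives
\[
\frac{z(X^R)}{z(X)} \geq \frac{m\alpha/2 + 1}{6m - 7} \geq \frac{\alpha}{12},
\]
and therefore $z(X^R)/z(X) \in \Omega(\log n)$.

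The only point needing care is the placement of the padding. It must go at the front of $X$, so that the relevant side $X^R$ has $T$ as a prefix and suffix-monotonicity applies. The front padding must then be shown not to hurt $z(X)$, which is the shifting argument above. Together with the upper bound of Theorem~\ref{thm:upper_bounds}, this yields $\Rev_z(n) \in \Theta(\log n)$.
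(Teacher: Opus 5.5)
Your proposal is correct and follows the paper's proof essentially step by step. You use the same padding $X = \#^\ell T^R$ with $n_\alpha = 4^\alpha - 1$ and the same suffix-monotonicity bound $z(X^R) \geq z(T)$. You also obtain the same denominator bound $z(X) \leq z(T^R) + 2 \leq 6m - 7$, which leads to the identical final ratio $\alpha/12$.
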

\begin{proof}
We use the same strategy as in the proof of Theorem~\ref{thm:truncate_sensitivity_lowerbound}.

For each integer $\alpha \geq 3$, define $n_\alpha = 4^\alpha - 1$.
The string $T$ constructed above satisfies $|T| = (m - 1)(m + 1) = n_\alpha$.
Moreover, $n_\alpha \in \Theta(4^\alpha)$.
Given a sufficiently large positive integer $n$, let $\alpha$ be the largest integer satisfying $n_\alpha \leq n$.
Let $T$ be the string constructed above for this value of $\alpha$.
Let $\ell = n - n_\alpha$, and let $\#$ be a character that does not occur in $T$.
Define $X = \#^\ell T^R$, so that $|X| = n$ and $X^R = T\#^\ell$.

The string $T$ is a prefix of $X^R$, so $z(T) \leq z(X^R)$.
An LZ-like factorization of $X$ consists of at most two phrases for $\#^\ell$, followed by the phrases in the LZ factorization of $T^R$.
Therefore, $z(X) \leq z(T^R) + 2$.
The maximality of $\alpha$ gives $n_\alpha \in \Theta(n)$, and hence $\log n_\alpha \in \Theta(\log n)$.
Theorems~\ref{thm:z_T} and~\ref{thm:z_T_reverse} give
\[
    \frac{z(X^R)}{z(X)}
    \geq \frac{z(T)}{z(T^R) + 2}
    \geq \frac{m \alpha/2 + 1}{6m - 7}
    \geq \frac{\alpha}{12}.
\]
Therefore, $z(X^R) / z(X) \in \Omega(\log n)$.
\end{proof}
Combining Theorem~\ref{thm:reversal_sensitivity_lower_bound} with the upper bound in Theorem~\ref{thm:upper_bounds} gives the following result.
\begin{corollary}[Sensitivity to String Reversal]
$\Rev_z(n) \in \Theta(\log n)$.
\end{corollary}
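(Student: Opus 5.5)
The corollary $\Rev_z(n) \in \Theta(\log n)$ has two halves, and I will prove them separately.

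\emph{Lower bound.} For every sufficiently large $n$, Theorem~\ref{thm:reversal_sensitivity_lower_bound} supplies a string $X$ of length $n$ with $z(X^R)/z(X) \in \Omega(\log n)$. The definition
\[
\Rev_z(n) = \max_{|T| = n} z(T^R)/z(T)
\]
takes the maximum over all strings of length $n$. Instantiating it at $T = X$ therefore gives $\Rev_z(n) \geq z(X^R)/z(X)$, and hence $\Rev_z(n) \in \Omega(\log n)$.

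\emph{Upper bound.} Theorem~\ref{thm:upper_bounds}, applied with $\textsf{Func} = \Rev$, gives $\Rev_z(n) \in O(\log n)$ directly. That argument rests on the symmetry of $\delta$ together with the bounds $\delta(X) \in O(z(X))$ and $z(X) \in O(\delta(X)\log n)$.

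Combining the two bounds yields $\Rev_z(n) \in \Theta(\log n)$.

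The only point that needs care is the quantifier: the lower bound must hold for every sufficiently large $n$, not just for the special lengths $n_\alpha = 4^\alpha - 1$. Theorem~\ref{thm:reversal_sensitivity_lower_bound} already handles this by padding with the fresh character $\#$, so there is no real obstacle and the corollary follows directly from the two cited theorems.
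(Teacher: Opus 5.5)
Your proposal is correct and follows the paper's proof: the lower bound comes from instantiating the maximum in $\Rev_z(n)$ at the string of Theorem~\ref{thm:reversal_sensitivity_lower_bound}, which already covers every sufficiently large $n$ by padding with $\#$. The upper bound is exactly Theorem~\ref{thm:upper_bounds} with $\textsf{Func} = \Rev$.
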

 \section{Size Relationships between the LZ Factorization and Other Repetitiveness Measures}

\subsection{Separation between the LZ Factorization and Collage Systems}
We study the size relationship between Lempel--Ziv factorizations and collage systems.
Specifically, we compare the measures $z$ and $c$.

Although it is known that $c(S) \in O(z(S))$ for every string $S$~\cite{NavarroOP21}, no string family satisfying $c(S) \in o(z(S))$ is known. 

The result of the previous section gives the following theorem.
\begin{theorem} \label{thm:collage_vs_lz}
For every sufficiently large positive integer $n$, there exists a string $Y$ of length $n$ such that $z(Y) / c(Y) \in \Omega(\log n)$.
\end{theorem}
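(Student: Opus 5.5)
We will take $Y = X^R$, where $X$ is the string from Theorem~\ref{thm:reversal_sensitivity_lower_bound}; concretely, $Y = T\#^\ell$ with $|Y| = n$. The argument combines the symmetry of $c$ with the known bound $c(S) \in O(z(S))$~\cite{NavarroOP21}. Since $c$ is symmetric, $c(Y) = c(Y^R) = c(X)$. Applying $c \in O(z)$ to $X$ gives $c(X) \leq \beta z(X)$ for some absolute constant $\beta$. Theorem~\ref{thm:z_T_reverse} together with the padding argument then yields $z(X) \leq z(T^R) + 2 \leq 6m - 7$. Hence $c(Y) \leq \beta(6m-7) = O(m)$.

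For the numerator, $T$ is a prefix of $Y$. By monotonicity of $z$ under appending a suffix and by Theorem~\ref{thm:z_T}, $z(Y) \geq z(T) \geq m\alpha/2 + 1$. Dividing, we get $z(Y)/c(Y) \geq (m\alpha/2+1)/(\beta(6m-7)) = \Omega(\alpha)$. As in the proof of Theorem~\ref{thm:truncate_sensitivity_lowerbound}, $\alpha$ is chosen maximal with $n_\alpha = 4^\alpha - 1 \leq n$. This gives $n_\alpha \in \Theta(n)$ and therefore $\alpha \in \Theta(\log n)$, which finishes the argument.

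The only step needing care is the use of $c \in O(z)$ with a constant independent of $n$. I would cite it as stated in~\cite{NavarroOP21}, where it is a uniform linear bound. Alternatively, one could bound $c(X)$ directly by simulating the LZ-like factorization of $T^R$ from Theorem~\ref{thm:z_T_reverse} inside a collage system, using truncation rules for copy phrases. Since that phrase structure is very explicit (each block $(R_x)^R$ is a constant number of pieces of an earlier block $(R_y)^R$), a direct construction would use $O(1)$ nonterminals per block via truncations and concatenations of the nonterminal for $(R_y)^R$. That also gives $c(X) = O(m)$ without invoking the general theorem.
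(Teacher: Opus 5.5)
Your proposal is correct and follows the paper's proof. Like the paper, you take $Y = X^R$ from Theorem~\ref{thm:reversal_sensitivity_lower_bound}, use the symmetry of $c$ and the bound $c \in O(z)$ to get $c(Y) = c(X) \in O(z(X))$, and then invoke the reversal gap; you simply unroll that theorem into the explicit bounds $z(Y) \geq z(T) \geq m\alpha/2 + 1$ and $z(X) \leq 6m - 7$. The direct collage construction in your last paragraph is a valid but unnecessary alternative.
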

\begin{proof}
Let $X$ be the string of length $n$ given by Theorem~\ref{thm:reversal_sensitivity_lower_bound}, and define $Y=X^R$.
The bound $c(S) \in O(z(S))$ gives $c(X) \in O(z(X))$.
The symmetry of $c$ gives $c(Y)=c(X^R)=c(X)$.
Combining these relations gives
$\frac{z(Y)}{c(Y)} \in \Omega\left(\frac{z(X^R)}{z(X)}\right)$.
The reversal-sensitivity bound gives $z(X^R) / z(X) \in \Omega(\log n)$.
Therefore,
$\frac{z(Y)}{c(Y)} \in \Omega(\log n)$.
\end{proof}
Combining the theorem above with the upper bound $\Ratio_{z,c}(n) \in O(\log n)$ gives the following result.
\begin{corollary}
$\Ratio_{z,c}(n) \in \Theta(\log n)$.
\end{corollary}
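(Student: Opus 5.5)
The statement immediately above is the corollary $\Ratio_{z,c}(n) \in \Theta(\log n)$. I would prove it by establishing the upper and lower bounds separately.

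\textbf{Upper bound.} Take it from the substring-complexity sandwich recalled in the preliminaries. For every string $S$ of length $n$ we have $z(S) \in O(\delta(S)\log n)$, and $\delta(S) \in O(c(S))$. Combining the two gives $z(S)/c(S) \in O(\log n)$ uniformly over all $S \in \Sigma^n$, and hence $\Ratio_{z,c}(n) \in O(\log n)$.

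\textbf{Lower bound.} Invoke Theorem~\ref{thm:collage_vs_lz}. For every sufficiently large $n$ it supplies a string $Y$ of length $n$ with $z(Y)/c(Y) \in \Omega(\log n)$. By definition, $\Ratio_{z,c}(n)$ is a maximum over all strings of length $n$, so it is at least $z(Y)/c(Y)$, which yields $\Ratio_{z,c}(n) \in \Omega(\log n)$.

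Putting the two bounds together gives the $\Theta(\log n)$ claim. There is one technicality to watch. The ratio is defined over $\Sigma^n$, but the witnesses come from $X = \#^\ell T^R$, whose alphabet grows with $n$. I would therefore read $\Sigma$ as an arbitrary, unbounded alphabet, consistent with how the paper uses it, and check that the constants hidden in the two relations $\delta \in O(c)$ and $z \in O(\delta \log n)$ do not depend on the alphabet size. I expect this to be the only nontrivial point. Everything else is a direct combination of earlier results.
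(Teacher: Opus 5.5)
Your proposal is correct and follows the paper's own argument: the $O(\log n)$ upper bound comes from the substring-complexity bounds in the preliminaries ($\delta \in O(c)$ and $z \in O(\delta \log n)$), and the $\Omega(\log n)$ lower bound is Theorem~\ref{thm:collage_vs_lz} applied to the definition of $\Ratio_{z,c}(n)$ as a maximum. Your alphabet caveat does not create a gap, since the paper also works over an unbounded alphabet and the constants in those known bounds do not depend on $|\Sigma|$.
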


\subsection{Incomparability between the LZ Factorization and Lex-Parse} \label{sse:lex_lz}
In this subsection, we show that the two measures $z$ and $v$ are \emph{incomparable}.
That is, there is a string family satisfying $v(S) \in o(z(S))$ and another satisfying $z(S) \in o(v(S))$.
Equivalently, both $\Ratio_{z,v}(n) \in \omega(1)$ and $\Ratio_{v,z}(n) \in \omega(1)$ hold.
A string family of the former type is already known.
For example, the $k$-th \emph{Fibonacci string} $F_k$ satisfies $v(F_k) \in \Theta(1)$ and $z(F_k) \in \Theta(k)$~\cite{NavarroOP21}.
The bounds for Fibonacci strings 
and a general upper bound $\Ratio_{z,v}(n) \in O(\log n)$
give the asymptotically tight evaluation $\Ratio_{z,v}(n) \in \Theta(\log n)$.
On the other hand, no string family satisfying $z(S) \in o(v(S))$ is known.
In the following, we construct such string families and determine the 
worst-case ratio of $v$ to $z$ as $\Ratio_{v,z}(n) = \Theta(\log n)$.

To prove a lower bound for lex-parse, we modify the construction in Section~\ref{se:lz_sensitivity}.
Fix a positive integer $\alpha \geq 3$ and let $m = 2^\alpha$.
The strings defined below are over the following alphabet $\Sigma'$, which differs from the alphabet used in Section~\ref{se:lz_sensitivity}.
\[
    \Sigma'
    = \{\mathtt{*}, \mathtt{b}\}
      \cup \{\mathtt{a}_i \mid i \in \Z_m\}
      \cup \{\mathtt{\$}_{U,x}, \mathtt{\$}_{V,x} \mid x \in \Z_m \setminus \{0\}\}.
\]

For each $x \in \Z_m \setminus \{0\}$, define strings $P_x$ and $Q_x$ of length $m+1$ as follows.
For each $i \in \Z_m$, let
\[
    P_x[i] = Q_x[i]
    =
    \begin{cases}
        \mathtt{a}_i & \text{if } \pi_\alpha(i) < x, \\
        \mathtt{*} & \text{if } \pi_\alpha(i) = x, \\
        \mathtt{b} & \text{if } \pi_\alpha(i) > x.
    \end{cases}
\]
The last characters of the two strings are $P_x[m] = \mathtt{\$}_{U,x}$ and $Q_x[m] = \mathtt{\$}_{V,x}$.

Since $\pi_\alpha^{-1}=\pi_\alpha$,
the index $i$ such that $P_x[i] = Q_x[i] = \mathtt{*}$ is $\pi_\alpha(x)$.
Thus, $P_x$ and $Q_x$ are obtained from $R_x$ by replacing the character $\mathtt{b}$ at position $\pi_\alpha(x)$ with $\mathtt{*}$ and replacing $\mathtt{\$}_x$ with $\mathtt{\$}_{U,x}$ and $\mathtt{\$}_{V,x}$, respectively.
We call each $P_x$ and $Q_x$ a \emph{block}.

We define the string $U$ as $U = P_1 P_2 \cdots P_{m-1}$.
Let $p_1, \dots, p_{m-1}$ be the indices of the blocks of $T$ from left to right.
Namely, $T = R_{p_1} R_{p_2} \cdots R_{p_{m-1}}$.
We define $V = Q_{p_1} Q_{p_2} \cdots Q_{p_{m-1}}$.
The blocks $P_x$ appear in $U$ in the same order as the blocks $R_x$ appear in $S$.
Similarly, the blocks $Q_x$ appear in $V$ in the same order as the blocks $R_x$ appear in $T$.
Both $U$ and $V$ consist of $m-1$ strings of length $m+1$, and hence $|UV| = 2(m-1)(m+1)$.

We use the following total ordering on $\Sigma'$:
\[
    \mathtt{*}
    \prec \mathtt{b}
    \prec \mathtt{a}_0
    \prec \mathtt{a}_1
    \prec \cdots
    \prec \mathtt{a}_{m-1}
    \prec \mathtt{\$}_{V,1}
    \prec \cdots
    \prec \mathtt{\$}_{V,m-1}
    \prec \mathtt{\$}_{U,1}
    \prec \cdots
    \prec \mathtt{\$}_{U,m-1}.
\]

We first bound the size of the LZ factorization of $UV$.
\begin{theorem} \label{thm:z_UV}
$z(UV) \leq 8m-8$.
\end{theorem}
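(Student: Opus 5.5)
The approach mirrors the proof of Theorem~\ref{thm:z_ST}: since the LZ factorization has the minimum number of phrases among all LZ-like factorizations, it suffices to exhibit an LZ-like factorization of $UV$ with at most $8m-8$ phrases. I would factorize $U$ block by block with a constant number of phrases per block, and then factorize each block $Q_x$ of $V$ by copying from $P_x$, which already occurs in $U$.

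For the first block $P_1$, recall that $\pi_\alpha(1)=m/2$. So $P_1 = \mathtt{a}_0\,\mathtt{b}^{m/2-1}\,\mathtt{*}\,\mathtt{b}^{m/2-1}\,\mathtt{\$}_{U,1}$. This factorizes as $\mathtt{a}_0 \mid \mathtt{b} \mid \mathtt{b}^{m/2-2} \mid \mathtt{*} \mid \mathtt{b}^{m/2-1} \mid \mathtt{\$}_{U,1}$, which is at most six phrases. Since $m\ge 8$, each copy has an earlier (possibly overlapping) source.

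For $1\le x<m-1$, compare $P_x$ and $P_{x+1}$. They differ only at the following positions:
\begin{itemize}
\item position $\pi_\alpha(x)$, where $\mathtt{*}$ becomes $\mathtt{a}_{\pi_\alpha(x)}$;
\item position $\pi_\alpha(x+1)$, where $\mathtt{b}$ becomes $\mathtt{*}$;
\item position $m$, where the terminal changes.
\end{itemize}
Thus $P_{x+1}$ consists of at most three unchanged segments copied from $P_x$ and three single-character phrases, for at most six phrases in all. One could save phrases here, but it is not needed. This gives at most $6+6(m-2)=6m-6$ phrases for $U$.

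Each $Q_x$ equals $P_x$ except for its final character. So $Q_x[0..m-1]$ is copied from the occurrence of $P_x$ in $U$ as one phrase, and $\mathtt{\$}_{V,x}$ is a single-character phrase. This gives $2(m-1)$ phrases for $V$, and the total is $6m-6+2m-2=8m-8$.

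The only delicate step is the counting for the transition from $P_x$ to $P_{x+1}$. I need that the three differing positions split $P_{x+1}$ into at most three nonempty unchanged segments plus three singletons. Empty segments only reduce the count, so this holds, and the bound $8m-8$ follows.
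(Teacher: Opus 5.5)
Your proposal is correct and matches the paper's proof essentially step for step. Both use an LZ-like factorization with six phrases for $P_1$, at most six phrases for each $P_{x+1}$ (copying around the mismatch positions $\pi_\alpha(x)$, $\pi_\alpha(x+1)$, and $m$), and two phrases for each $Q_x$ copied from $P_x$, for a total of $6m-6+2(m-1)=8m-8$.
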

\begin{proof}
It suffices to construct an LZ-like factorization of $UV$ with at most $8m-8$ phrases.

We first consider the first $m-1$ blocks $U=P_1P_2\cdots P_{m-1}$.
Since $\pi_\alpha(1)=m/2$, the first block is $P_1 = \mathtt{a}_0\mathtt{b}^{m/2-1}\mathtt{*}\mathtt{b}^{m/2-1}\mathtt{\$}_{U,1}$.
The first run $\mathtt{b}^{m/2-1}$ takes at most two phrases, and the second run takes at most one phrase by copying the earlier occurrence.
Thus, $P_1$ can be factorized using at most six phrases.
For every $1 \leq i<m-1$, the characters of $P_i$ and $P_{i+1}$ are equal except at positions $\pi_\alpha(i)$, $\pi_\alpha(i+1)$, and $m$.
Thus, $P_{i+1}$ takes at most six phrases by copying at most three substrings from $P_i$ and using one phrase for each of the other three characters.
Consequently, $U$ has an LZ-like factorization with at most $6+6(m-2)=6m-6$ phrases.

Each block $Q_i$ in $V$ has an earlier corresponding block $P_i$ in $U$.
Its first $m$ characters can be copied from $P_i$, and its last character $\mathtt{\$}_{V,i}$ forms one phrase.
Thus, each of the $m-1$ blocks in $V$ takes at most two phrases.
The resulting factorization has at most $(6m-6)+2(m-1)=8m-8$ phrases, and therefore $z(UV) \leq 8m-8$.
\end{proof}

We use the following property for lex-parse.
\begin{lemma} \label{lem:lex_parse_boundary}
Let $S$ be a string of length $n$, and let $t \in \Z_n$.
Let $d \geq 1$ be an integer such that $\LCP(S[t..n-1], S[p..n-1]) \leq d$ for every suffix $S[p..n-1] \prec S[t..n-1]$.
Let $F_i$ be the phrase in the lex-parse of $S$ that contains position $t$, and let $e = b_i + |F_i| - 1$ be its ending position.
Then $e < t + d$.
\end{lemma}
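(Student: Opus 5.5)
The plan is a proof by contradiction that compares the phrase $F_i$ with the greedy rule defining the lex-parse. Suppose $e \geq t + d$. I split on whether the phrase has length one or is a copy.

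First, the length-one case. If $|F_i| = \max\{\ell_i,1\} = 1$, then $b_i = e$. Since $F_i$ contains $t$, we get $e = t$, and so $e = t < t + d$ because $d \geq 1$. This contradicts the assumption, so we may assume $|F_i| = \ell_i \geq 2$.

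Now the copy case. By the definition of $\ell_i$, there is a position $p$ with $S[p..n-1] \prec S[b_i..n-1]$ and $\LCP(S[b_i..n-1], S[p..n-1]) = \ell_i$. Set $\delta = t - b_i$, which satisfies $0 \leq \delta \leq \ell_i - 1$ because $t \in [b_i, e]$. The common prefix has length $\ell_i > \delta$, so we may drop its first $\delta$ characters. This gives $S[b_i+\delta..n-1]$ and $S[p+\delta..n-1]$ agreeing on exactly $\ell_i - \delta$ characters, with their next characters (if any) in the same relation as before.

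This truncation is the one step needing care. Dropping a common prefix of length $\delta < \ell_i$ preserves the lexicographic relation in both cases of the definition of $\prec$. If the suffixes first differ at offset $\ell_i$, the comparison still happens at the same mismatched characters. If $S[p..n-1]$ is a proper prefix of $S[b_i..n-1]$, then it remains a proper prefix after truncation. Hence $S[p+\delta..n-1] \prec S[t..n-1]$, and
\[
\LCP(S[t..n-1], S[p+\delta..n-1]) = \ell_i - \delta = e - t + 1.
\]

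To conclude, the hypothesis applied to the suffix $S[p+\delta..n-1]$ gives $e - t + 1 \leq d$. This means $e \leq t + d - 1 < t + d$, which contradicts $e \geq t + d$ and proves the lemma. Since this argument yields the bound directly, it can also be written as a direct proof rather than by contradiction.
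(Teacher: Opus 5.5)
Your proposal is correct and follows essentially the same route as the paper: take the earlier lexicographically smaller suffix that witnesses $|F_i| = \ell_i$, drop the common prefix of length $t - b_i$ (which preserves $\prec$), and apply the hypothesis to get $e - t + 1 \leq d$. The only difference is presentation. You treat the length-one phrase as a separate case and spell out the proper-prefix case of $\prec$, whereas the paper notes that $e \geq t + d$ already forces $|F_i| \geq 2$.
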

\begin{proof}
Suppose that $e \geq t + d$.
Then $F_i$ contains $S[t..t+d]$ and has length at least two.
By the definition of lex-parse,
there exists a suffix $S[p'..n-1]$
such that $S[p'..n-1] \prec S[b_i..n-1]$ and $\LCP(S[p'..n-1], S[b_i..n-1]) = |F_i|$.
After removing the common prefix of length $t - b_i$ from the two suffixes, their lexicographic order is preserved.
Thus, $S[p' + t - b_i..n-1] \prec S[t..n-1]$.
The removed prefix has length $t - b_i$, so $\LCP(S[p' + t - b_i..n-1], S[t..n-1]) = |F_i| - (t - b_i) = e - t + 1 \geq d + 1$.
On the other hand, since $S[p' + t - b_i..n-1] \prec S[t..n-1]$, the hypothesis gives $\LCP(S[p' + t - b_i..n-1], S[t..n-1]) \leq d$, a contradiction.
\end{proof}

For $x \in \Z_m \setminus \{0\}$ and $j \in \Z_{\alpha-1}$, suppose that $\bin_\alpha(x) = A\mathtt{01}B$, where $A$ and $B$ are binary strings of lengths $\alpha - j - 2$ and $j$, respectively.
For every pair $(x, j)$ satisfying this condition, define $d_j = 2^{\alpha - j - 2}$ and $k_{x, j} = \int(B^R\mathtt{00}A^R)$.
The definitions of $d_j$ and $k_{x, j}$ give $k_{x, j} + d_j = \int(B^R\mathtt{01}A^R)$ and $k_{x, j} + 2d_j = \int(B^R\mathtt{10}A^R)$.
In particular, $k_{x, j} + 2d_j < m$.
Also, the definition of $\pi_\alpha$ gives $\pi_\alpha(k_{x, j}) = \int(A\mathtt{00}B) = x - 2^j$, $\pi_\alpha(k_{x, j} + d_j) = \int(A\mathtt{10}B) = x + 2^j$, and $\pi_\alpha(k_{x, j} + 2d_j) = \int(A\mathtt{01}B) = x$.
Define $H_{x, j} = Q_x[k_{x, j}..k_{x, j} + d_j]$.
Since $k_{x, j} + d_j < k_{x, j} + 2d_j < m$, the substring $H_{x, j}$ does not contain the last character $\mathtt{\$}_{V,x}$ of $Q_x$.
Figure~\ref{fig:lex_block_orders_for_h_four} shows examples of the blocks $P_x$ and $Q_x$, their orders in $U$ and $V$, and the substrings $H_{x, j}$.

\begin{figure}[H]
\centering
\begin{minipage}[t]{0.485\textwidth}
\centering
\vspace{0pt}
\makebox[\linewidth][l]{\hspace*{-0.10892\linewidth}\includegraphics[width=\linewidth,clip]{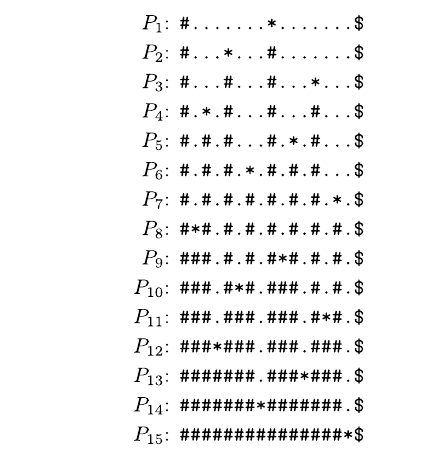}}
\end{minipage}
\hfill
\begin{minipage}[t]{0.485\textwidth}
\centering
\vspace{0pt}
\makebox[\linewidth][l]{\hspace*{-0.10892\linewidth}\includegraphics[width=\linewidth,clip]{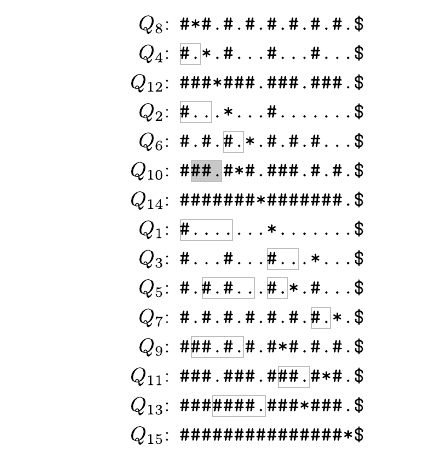}}
\end{minipage}
\caption{
  Illustration of the blocks $P_x$ and $Q_x$, their orders in $U$ and $V$, and the substrings $H_{x, j}$ for $\alpha = 4$.
  To make the block contents easier to see, we display the character $\mathtt{a}_i$ at position $i$ of each block as \texttt{\#}, each $\mathtt{b}$ as \texttt{.}, and each terminal character $\mathtt{\$}_{U,x}$ or $\mathtt{\$}_{V,x}$ as $\mathtt{\$}$.
  Concatenating the blocks from top to bottom in the left and right panels yields $U$ and $V$, respectively.
  In the right panel, the rectangles mark the substrings $H_{x, j}$, and only the rectangle for $H_{10,1}$ is highlighted with gray.
  For this highlighted pair, $\bin_4(x) = \mathtt{1010} = A\mathtt{01}B$, where $A = \mathtt{1}$ and $B = \mathtt{0}$.
  The definitions give $d_j = 2$ and $k_{x, j} = \int(B^R\mathtt{00}A^R) = \int(\mathtt{0001}) = 1$.
  The highlighted substring is $H_{x, j} = Q_{10}[1..3] = \mathtt{a}_1\mathtt{a}_2\mathtt{b}$.
}
\label{fig:lex_block_orders_for_h_four}
\end{figure}

\begin{lemma} \label{lem:lex_large_occurrences}
For every $y \in \Z_m \setminus \{0\}$, no suffix of either $P_y$ or $Q_y$ that starts with $H_{x, j}$ is lexicographically smaller than $Q_x[k_{x, j}..m]$.
\end{lemma}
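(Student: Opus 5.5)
Since $H_{x,j}$ begins with a character $\mathtt{a}_i$, Observation~\ref{obs:equal_substring_offsets} pins the offset, which reduces the lemma to a comparison of the tails of two blocks. Concretely, $H_{x,j}$ starts with $Q_x[k_{x,j}] = \mathtt{a}_{k_{x,j}}$ because $\pi_\alpha(k_{x,j}) = x - 2^j < x$. A suffix of $P_y$ or $Q_y$ that begins with $H_{x,j}$ therefore starts at offset $k_{x,j}$ inside its block. (The same-offset argument of that observation only uses that $\mathtt{a}_i$ occurs only at offset $i$ of any block, which still holds for the new blocks.) So the suffix in question has the form $W\cdot(\text{rest of the string})$ with $W = P_y[k_{x,j}..m]$ or $W = Q_y[k_{x,j}..m]$. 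We compare it with $Q_x[k_{x,j}..m]$ followed by whatever follows $Q_x$.

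Next we find which $y$ are allowed. The suffix starts with $H_{x,j}$, whose first character is $\mathtt{a}_{k_{x,j}}$ and whose last character is $Q_x[k_{x,j}+d_j] = \mathtt{b}$, since $\pi_\alpha(k_{x,j}+d_j) = x + 2^j > x$. In block $y$ this forces $\pi_\alpha(k_{x,j}) = x - 2^j < y$ and $\pi_\alpha(k_{x,j}+d_j) = x + 2^j > y$. Hence $y \in [x-2^j+1,\, x+2^j-1]$.

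For each such $y$ we compare $W$ with $Q_x[k_{x,j}..m]$ character by character. Both agree on $H_{x,j}$, so the first disagreement lies at some offset in $(k_{x,j}+d_j, m]$. The key position is $k_{x,j}+2d_j$, where $Q_x$ has $\mathtt{*}$ because $\pi_\alpha(k_{x,j}+2d_j) = x$. There are two cases.
\begin{itemize}
\item If $y \ne x$, the blocks first differ at offset $t = \pi_\alpha(w)$ for the $w$ between $x$ and $y$ whose image is smallest in the region after $H_{x,j}$. Each differing character in $Q_x$ is $\mathtt{*}$ or $\mathtt{b}$, and the corresponding character of $W$ is $\mathtt{a}_t$ or $\mathtt{b}$, or $\mathtt{*}$ at offset $\pi_\alpha(y)$. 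We need $W$ to win at the first difference. The plan is to use the bit structure: positions in $(k_{x,j}+d_j,\, k_{x,j}+2d_j)$ with image in $(x-2^j, x+2^j)$ should not exist. Then the first difference is at offset $k_{x,j}+2d_j$, where $Q_x$ has $\mathtt{*}$, the smallest character, and $W$ has something else since $y \ne x$. This gives $W \succ Q_x[\ldots]$ strictly.
\item If $y = x$, then $W$ is $Q_x[k_{x,j}..m]$ itself, which is not smaller, or $P_x[k_{x,j}..m]$, which agrees up to the terminal, where $\mathtt{\$}_{U,x} \succ \mathtt{\$}_{V,x}$.
\end{itemize}

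The main obstacle is the bit-pattern claim in the first case: that no offset $t$ strictly between $k_{x,j}+d_j$ and $k_{x,j}+2d_j$ has $\pi_\alpha(t)$ in $(x-2^j, x+2^j)$. Here $k_{x,j}+d_j = \int(B^R\mathtt{01}A^R)$ and $k_{x,j}+2d_j = \int(B^R\mathtt{10}A^R)$. An integer $t$ strictly between them has the top $j$ bits $B^R$ and then $\mathtt{01}$ followed by a suffix exceeding $A^R$. Under reversal, $\pi_\alpha(t) = \int(A'\mathtt{10}B)$ with $A' \ne A$. Every number in $(x-2^j, x+2^j)$ has high part $A$ followed by $\mathtt{01}$ or $\mathtt{00}$ with high bits $A$, or $\mathtt{10}$ with high bits $A$. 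So $A' \ne A$ excludes $t$, and the detailed check consists of confirming this window computation carefully.

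If strictness is also needed at ties beyond the block, the distinct terminal characters settle it.
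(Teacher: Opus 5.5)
Your proposal is correct and follows essentially the paper's own argument: you pin the occurrence to offset $k_{x,j}$, restrict $y$ to $(x-2^j, x+2^j)$, show that no mismatch can lie strictly between $k_{x,j}+d_j$ and $k_{x,j}+2d_j$, so the first mismatch is the $\mathtt{*}$ at $k_{x,j}+2d_j$, and settle $y=x$ via $\mathtt{\$}_{V,x}\prec\mathtt{\$}_{U,x}$. One slip needs fixing: the integers strictly between $\int(B^R\mathtt{01}A^R)$ and $\int(B^R\mathtt{10}A^R)$ also include those of the form $B^R\mathtt{10}s$ with $s$ smaller than $A^R$ (e.g.\ $4$ when $\alpha=4$, $x=10$, $j=1$), but their reversal $s^R\mathtt{01}B$ again has high part different from $A$, or, as the paper puts it, $\pi_\alpha(c)$ in the window forces $\bin_\alpha(c)$ to end in $A^R$, hence $c\equiv k_{x,j}+d_j \pmod{d_j}$, which is impossible strictly between $k_{x,j}+d_j$ and $k_{x,j}+2d_j$.
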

\begin{proof}
Write $k = k_{x, j}$ and $d = d_j$.
The values of $\pi_\alpha$ at $k$, $k + d$, and $k + 2d$ are $x - 2^j$, $x + 2^j$, and $x$, respectively.
Since $x - 2^j < x < x + 2^j$, the definition of $Q_x$ gives $Q_x[k] = \mathtt{a}_k$, $Q_x[k + d] = \mathtt{b}$, and $Q_x[k + 2d] = \mathtt{*}$.
Every occurrence of $H_{x, j}$ in $Q_y$ starts at offset $k$ because its first character $\mathtt{a}_k$ occurs in $Q_y$ only at this offset.

Suppose that $H_{x, j}$ occurs at offset $k$ in $Q_y$ with $y \neq x$.
Then, we have $Q_y[k] = \mathtt{a}_k$ and $Q_y[k + d] = \mathtt{b}$.
The definition of $Q_y$
gives the inequality $x - 2^j < y < x + 2^j$.

Let $\ell = \LCP(Q_x[k..m], Q_y[k..m])$.
The occurrence of $H_{x,j}$ gives $\ell > d$.
Suppose $\ell \geq 2d$.
That is,  $Q_x[k..k + 2d - 1] = Q_y[k..k + 2d - 1]$.
At offset $k + 2d$, $Q_x[k + 2d] = \mathtt{*}$.
For $y \neq x$, the definition of $Q_y$ gives $Q_y[k + 2d] = \mathtt{b}$ when $y < x$ and $Q_y[k + 2d] = \mathtt{a}_{k + 2d}$ when $y > x$.
The ordering $\mathtt{*} \prec \mathtt{b} \prec \mathtt{a}_{k + 2d}$ gives $Q_x[k..m] \prec Q_y[k..m]$.

We next show that the remaining case $d < \ell < 2d$ does not occur.
Suppose for contradiction that $d < \ell < 2d$, and let $c = k + \ell$ be the position of the first mismatch between the two suffixes.
The equality $c - (k + d) = \ell - d$ gives $0 < c - (k + d) < d$.
Suppose that $\pi_\alpha(c) \in [x - 2^j, x + 2^j)$.
The equalities $\bin_\alpha(x - 2^j) = A\mathtt{00}B$ and $\bin_\alpha(x + 2^j) = A\mathtt{10}B$ imply that $\bin_\alpha(\pi_\alpha(c))$ starts with $A$,
or equivalently that $\bin_\alpha(c)$ ends with $A^R$.
Also, the equality $\bin_\alpha(k + d) = B^R\mathtt{01}A^R$ shows that $c - (k + d)$ is divisible by $d$.
The divisibility of $c - (k + d)$ contradicts $0 < c - (k + d) < d$, so $\pi_\alpha(c) \notin [x - 2^j, x + 2^j)$.
If $\pi_\alpha(c) < x - 2^j$, then $x - 2^j < x$ and $x - 2^j < y$ give $Q_x[c] = Q_y[c] = \mathtt{a}_c$.
If $\pi_\alpha(c) \geq x + 2^j$, then $x < x + 2^j$ and $y < x + 2^j$ give $Q_x[c] = Q_y[c] = \mathtt{b}$.
In both cases, $Q_x[c] = Q_y[c]$ holds, which contradicts the definition of $c$.

It remains to consider an occurrence of $H_{x, j}$ in $P_y$.
The character $\mathtt{a}_k$ occurs in $P_y$ only at offset $k$, so the occurrence starts at offset $k$.
Since $k + d < m$ and $P_y[0..m-1] = Q_y[0..m-1]$, the substring $H_{x, j}$ also occurs in $Q_y$ at offset $k$.
For $y \neq x$, the comparison at offset $k + 2d < m$ above and the equality $P_y[0..m-1] = Q_y[0..m-1]$ give $Q_x[k..m] \prec P_y[k..m]$.
For $y = x$, the equalities $Q_x[k..m - 1] = P_x[k..m - 1]$ and $Q_x[m] = \mathtt{\$}_{V, x} \prec \mathtt{\$}_{U, x} = P_x[m]$ give $Q_x[k..m] \prec P_x[k..m]$.
\end{proof}

\begin{theorem} \label{thm:v_UV}
$v(UV) \geq m(\alpha - 1)/4 + 1$.
\end{theorem}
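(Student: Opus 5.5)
The plan is to mimic the proof of Theorem~\ref{thm:z_T}: exhibit a family of pairwise disjoint intervals in $UV$, each forced to contain a phrase boundary of the lex-parse, and count them. For each pair $(x,j)$ with $\bin_\alpha(x)=A\mathtt{01}B$, let $s_x$ be the starting position of $Q_x$ in $UV$ and $t_{x,j}=s_x+k_{x,j}$. Consider the interval $J_{x,j}=[t_{x,j}, t_{x,j}+d_j]$, which is the occurrence of $H_{x,j}$ inside $Q_x$.

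\textbf{Step 1: each $J_{x,j}$ forces a phrase end.} Let $t=t_{x,j}$, $d=d_j$. Take any suffix $UV[p..]\prec UV[t..]$ and suppose its LCP with $UV[t..]$ exceeds $d$. Then it starts with $H_{x,j}$. By Observation~\ref{obs:equal_substring_offsets}-style reasoning (the terminals $\mathtt{\$}_{U,y},\mathtt{\$}_{V,y}$ are unique and $H_{x,j}$ contains none), this occurrence lies inside a single block $P_y$ or $Q_y$ at offset $k_{x,j}$. The suffix of $UV$ from there agrees with the block suffix through the unique terminal, so comparing the two suffixes reduces to comparing the block suffixes (up to and including the terminal). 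By Lemma~\ref{lem:lex_large_occurrences}, the block suffix is not smaller than $Q_x[k_{x,j}..m]$. If it is strictly larger, we contradict $UV[p..]\prec UV[t..]$. Equality of block suffixes occurs only for the same block and offset, that is, $p=t$. Hence all smaller suffixes have LCP at most $d$. Lemma~\ref{lem:lex_parse_boundary} then says the phrase containing $t$ ends before $t+d$, so some phrase ends inside $[t,t+d-1]\subseteq J_{x,j}$.

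\textbf{Step 2: disjointness and counting.} For fixed $x$, the admissible $j$ correspond to occurrences of $\mathtt{01}$ in $\bin_\alpha(x)$. Intervals for different $x$ lie in different blocks. Within one block we need the ends $[t,t+d_j-1]$ to be disjoint for distinct $j$; the offsets are $k_{x,j}=\int(B^R\mathtt{00}A^R)$. I expect this step to be the main obstacle. If genuine disjointness fails, I would instead argue that distinct $j$ yield distinct phrase ends. The natural route is to show that the window $[k_{x,j},k_{x,j}+d_j-1]$ contains only positions whose bit patterns are tied to $j$ (its high bits are $B^R\mathtt{0}$), so that windows for different $j$ cannot share a forced endpoint. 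Failing that, I would keep a subset of the pairs, such as only alternate $j$, at the cost of the constant.

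For the count, the number of pairs $(x,j)$ equals the total number of $\mathtt{01}$ occurrences over all $\alpha$-bit strings. Each of the $\alpha-1$ adjacent positions contributes $2^{\alpha-2}$, for a total of $(\alpha-1)m/4$. Since $x=0$ has no occurrence, ranging over $x\neq 0$ loses nothing. Distinct forced phrase ends, plus the final phrase ending at the last position outside all these windows, give $v(UV)\ge m(\alpha-1)/4+1$.
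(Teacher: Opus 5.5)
Your Step 1 and your count match the paper's proof; your reduction of the suffix comparison to a comparison of block suffixes through the unique terminal is spelled out more carefully than the paper's one-line appeal to Lemma~\ref{lem:lex_large_occurrences}. The gap is Step 2. The bound $m(\alpha-1)/4+1$ counts one forced phrase end per admissible pair $(x,j)$. That count is valid only if, for each fixed $x$, the windows $[t_{x,j}, t_{x,j}+d_j-1]$ are pairwise disjoint, and you leave this open. Your fallbacks do not replace it. ``Distinct $j$ yield distinct phrase ends'' is exactly what disjointness is needed for, since otherwise one phrase end could lie in two windows. Keeping only alternate $j$ would lower the constant and would no longer prove $v(UV)\ge m(\alpha-1)/4+1$.

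The missing step is short, and your own observation nearly finishes it. Every position $k_{x,j}+i$ with $0\le i<d_j$ has $B^R\mathtt{0}$ as its top $j+1$ bits, so its $(j+1)$-st most significant bit is $\mathtt{0}$. Suppose $j<j'$ are both admissible for $x$, and write $\bin_\alpha(x)=A'\mathtt{01}B'$ with $|B'|=j'$. The bit of $\bin_\alpha(x)$ at position $j$ (counting from the right, starting at $0$) is the $\mathtt{1}$ of the occurrence $A\mathtt{01}B$. This bit is the $(j+1)$-st character of $(B')^R$. Every position in the window for $j'$ has $(B')^R\mathtt{0}$ as its top $j'+1$ bits, so its $(j+1)$-st most significant bit is $\mathtt{1}$, and the two windows are disjoint.

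The paper argues arithmetically instead. Since $k_{x,j}=\pi_\alpha(x)-2d_j$, the window for $j$ is $[\pi_\alpha(x)-2d_j,\pi_\alpha(x)-d_j-1]$. Non-overlapping occurrences of $\mathtt{01}$ force $j'\ge j+2$, so $d_j\ge 4d_{j'}$ and $\pi_\alpha(x)-d_j-1<\pi_\alpha(x)-2d_{j'}$. With either argument inserted, your proof is complete and follows the paper's.
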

\begin{proof}
Let $n = |UV|$.
Let $\mathcal{P}$ be the set of pairs $(x, j)$ such that $x \in \Z_m \setminus \{0\}$, $j \in \Z_{\alpha-1}$, and $\bin_\alpha(x) = A\mathtt{01}B$ for binary strings $A$ and $B$ of lengths $\alpha - j - 2$ and $j$, respectively.
For each $(x, j) \in \mathcal{P}$, let $t_{x, j}$ be the position of $Q_x[k_{x, j}]$ in $UV$, and define $I_{x, j} = [t_{x, j}, t_{x, j} + d_j - 1]$.
The inequality $k_{x, j} + d_j < m$ shows that $I_{x, j}$ lies within $Q_x$ and $t_{x, j} + d_j < n$.
We first show that the intervals $I_{x, j}$ are pairwise disjoint.
Intervals defined by distinct values of $x$ lie in distinct blocks.
Consider two pairs $(x, j), (x, j') \in \mathcal{P}$ with $j < j'$.
The two occurrences of $\mathtt{01}$ in $\bin_\alpha(x)$ cannot overlap, so $j' - j \geq 2$ and thus $d_j \geq 4d_{j'}$.
The definitions give $k_{x, j} = \pi_\alpha(x) - 2d_j$ and $k_{x, j'} = \pi_\alpha(x) - 2d_{j'}$.
Thus, $I_{x, j}$ and $I_{x, j'}$ correspond to $Q_x[\pi_\alpha(x) - 2d_j..\pi_\alpha(x) - d_j - 1]$ and $Q_x[\pi_\alpha(x) - 2d_{j'}..\pi_\alpha(x) - d_{j'} - 1]$, respectively.
The inequality $d_j \geq 4d_{j'}$ gives $\pi_\alpha(x) - d_j - 1 < \pi_\alpha(x) - 2d_{j'}$, which proves that these two intervals are disjoint.

Fix $(x, j) \in \mathcal{P}$, and write $t = t_{x, j}$ and $d = d_j$.
Lemma~\ref{lem:lex_large_occurrences} shows that $\LCP(UV[t..n - 1], UV[p..n - 1]) \leq d$ for every suffix $UV[p..n - 1] \prec UV[t..n - 1]$.
Let $F_i$ be the phrase containing $t$, and let $e$ be its ending position.
Lemma~\ref{lem:lex_parse_boundary} gives $t \leq e < t + d$, so $e \in I_{x, j}$.
Thus, there exists at least one phrase ending in $I_{x, j}$.
The inequality $e < t + d < n$ also shows that $F_i$ is not the last phrase.

For each $j \in \Z_{\alpha-1}$, there are $2^{\alpha - j - 2}$ choices for $A$ and $2^j$ choices for $B$.
There are $m/4$ pairs $(x, j)$ for each of the $\alpha - 1$ values $j \in \Z_{\alpha-1}$, so $|\mathcal{P}| = m(\alpha - 1)/4$.
The $|\mathcal{P}|$ pairwise disjoint intervals contain the ending positions of $|\mathcal{P}|$ distinct phrases.
Since $n - 1 \notin I_{x, j}$ for every $(x, j) \in \mathcal{P}$, the phrase ending at position $n - 1$ must be counted in addition to the $|\mathcal{P}|$ phrases ending in the intervals.
Hence, $v(UV) \geq |\mathcal{P}| + 1 = m(\alpha - 1)/4 + 1$.
\end{proof}

The preceding result gives the following lower bound on the worst-case ratio of lex-parse to LZ factorization.
\begin{theorem} \label{thm:lex_lz_lowerbound}
For every sufficiently large positive integer $n$, there exists a string $W$ of length $n$ such that $v(W) / z(W) \in \Omega(\log n)$.
\end{theorem}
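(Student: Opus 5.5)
We follow the padding strategy used in the proofs of Theorems~\ref{thm:truncate_sensitivity_lowerbound} and~\ref{thm:reversal_sensitivity_lower_bound}. For each $\alpha \geq 3$, set $n_\alpha = 2(4^\alpha - 1) = |UV|$, so that $n_\alpha \in \Theta(4^\alpha)$. Given a sufficiently large $n$, we take $\alpha$ to be the largest integer with $n_\alpha \leq n$. Maximality gives $n_{\alpha+1} > n$, hence $n_\alpha \in \Theta(n)$ and $\alpha \in \Theta(\log n)$. We then pad with a fresh character $\#$ to length exactly $n$, setting $W = UV\#^\ell$ with $\ell = n - n_\alpha$.

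\textbf{Upper bound on $z(W)$.} The block $\#^\ell$ costs at most two phrases in an LZ-like factorization: one single character, then a self-overlapping copy. Theorem~\ref{thm:z_UV} therefore gives $z(W) \leq 8m - 6$.

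\textbf{Lower bound on $v(W)$.} This is the step requiring care, because lex-parse is not obviously monotone under appending a suffix: appending $\#^\ell$ changes the suffixes and could in principle change their order. We place $\#$ at the top of the order, i.e.\ $\#$ is larger than every character of $\Sigma'$. The argument of Theorem~\ref{thm:v_UV} then goes through for $W$ essentially verbatim.
\begin{itemize}
\item Lemma~\ref{lem:lex_large_occurrences} compares suffixes starting with $H_{x,j}$. Every such comparison is decided at a position inside a block, at offset $k + 2d < m$ or at the terminal $\mathtt{\$}$. The appended $\#$'s never participate, so the lemma still holds for suffixes of $W$.
\item The LCP bound $\leq d$ holds against all smaller suffixes of $W$. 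A smaller suffix sharing more than $d$ characters with $W[t..]$ must begin with $H_{x,j}$. By Observation-style offset reasoning it is then a block suffix, which the lemma excludes. Suffixes inside $\#^\ell$ share no prefix with $W[t..]$ at all.
\item Lemma~\ref{lem:lex_parse_boundary} is stated for an arbitrary string, so it applies to $W$ directly.
\end{itemize}
Together these give $v(W) \geq m(\alpha - 1)/4 + 1$. Alternatively, one could simply prepend the padding as $\#^\ell UV$ with $\#$ the largest character. We would check whichever choice makes the offset argument cleanest; appending is the first choice.

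\textbf{Conclusion.} Combining the two bounds,
\[
\frac{v(W)}{z(W)} \geq \frac{m(\alpha-1)/4 + 1}{8m - 6} \geq \frac{\alpha - 1}{32} \in \Omega(\log n).
\]
The main obstacle is confirming that the padding does not disturb the lexicographic comparisons used in Lemma~\ref{lem:lex_large_occurrences}. This is handled by making $\#$ extremal in the order and noting that all decisive comparisons occur within a single block.
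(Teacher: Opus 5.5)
Your proposal is correct and matches the paper's proof: the same padding $W = UV\#^\ell$ with a fresh, lexicographically largest $\#$, the same bound $z(W)\le z(UV)+2 \le 8m-6$, and the same final estimate $(\alpha-1)/32$. The only difference is that the paper simply asserts $v(UV)\le v(W)$, whereas you re-run the argument of Theorem~\ref{thm:v_UV} directly on $W$. Both routes are valid, because every decisive suffix comparison is settled inside $UV$; indeed $UV$ ends with a unique character, so no suffix of $UV$ is a proper prefix of another.
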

\begin{proof}
For each integer $\alpha \geq 3$, define $n_\alpha = 2(4^\alpha - 1)$.
The strings $U$ and $V$ constructed above satisfy $|UV| = 2(m - 1)(m + 1) = n_\alpha$.
Moreover, $n_\alpha \in \Theta(4^\alpha)$.
Given a sufficiently large positive integer $n$, let $\alpha$ be the largest integer satisfying $n_\alpha \leq n$.
Let $U$ and $V$ be the strings constructed above for this value of $\alpha$.
Let $\ell = n - n_\alpha$, and let $\#$ be a character that occurs in neither $U$ nor $V$, and set $\#$ be the lexicographically largest character.
Define $W = UV\#^\ell$, so that $|W| = n$.
Then, we have $z(UV) \leq z(W) \leq z(UV) + 2$ and $v(UV) \leq v(W) \leq v(UV) + 2$.
The maximality of $\alpha$ gives $n_\alpha \in \Theta(n)$, and hence $\log n_\alpha \in \Theta(\log n)$.
Theorems~\ref{thm:z_UV} and~\ref{thm:v_UV} now give
\[
    \frac{v(W)}{z(W)}
    \geq \frac{v(UV)}{z(UV) + 2}
    \geq \frac{m(\alpha-1)/4 + 1}{8m-6}
    \geq \frac{\alpha-1}{32}.
\]
The relations $n_\alpha \in \Theta(4^\alpha)$ and $\log n_\alpha \in \Theta(\log n)$ imply that $\alpha \in \Theta(\log n)$.
Therefore, $v(W) / z(W) \in \Omega(\log n)$.
\end{proof}

Combining the theorem above with the upper bound $\Ratio_{v,z}(n) \in O(\log n)$ gives the following result.
\begin{corollary}[Worst-Case Ratio of the Lex-Parse to the LZ Factorization]
$\Ratio_{v,z}(n) \in \Theta(\log n)$.
\end{corollary}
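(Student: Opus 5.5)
The corollary asserts $\Ratio_{v,z}(n) \in \Theta(\log n)$, so it splits into an upper bound and a lower bound, each following directly from results already stated.

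For the upper bound I will cite the preliminaries. There, $\delta(S) \in O(f(S))$ and $f(S) \in O(\delta(S)\log(1+|S|/\delta(S)))$ are recorded for every $f \in \{z,c,v\}$. Applied with $f = v$ and $f = z$, these give constants $\beta,\gamma > 0$ such that every string $S$ of length $n \geq 2$ satisfies
\[
v(S) \leq \beta\,\delta(S)\log n \leq \beta\gamma\, z(S)\log n .
\]
Taking the maximum over $S \in \Sigma^n$ yields $\Ratio_{v,z}(n) \in O(\log n)$. This is the same pattern as the proof of Theorem~\ref{thm:upper_bounds}, and no sensitivity argument is needed.

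For the lower bound I will invoke Theorem~\ref{thm:lex_lz_lowerbound}. It supplies, for every sufficiently large $n$, a string $W$ of length $n$ with $v(W)/z(W) \geq (\alpha-1)/32$ and $\alpha \in \Theta(\log n)$. Since $\Ratio_{v,z}(n)$ is a maximum over all strings of length $n$, it is at least this ratio, so $\Ratio_{v,z}(n) \in \Omega(\log n)$.

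The only point needing care is alphabet consistency. $\Ratio$ is defined over $\Sigma^n$ for a fixed alphabet, while the strings $W$ use alphabets $\Sigma'$ (plus $\#$) that grow with $\alpha$. I will handle this as the paper does for the sensitivity corollaries: treat $\Sigma$ as unbounded, or as large enough to contain all characters used, with $v$ measured with respect to the fixed order on $\Sigma'$ extended so that $\#$ is largest. Under this convention the two bounds combine to give $\Theta(\log n)$, and I expect no further obstacle.
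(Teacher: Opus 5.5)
Your proposal is correct and follows the paper's own argument. The $O(\log n)$ upper bound comes from the bounds $\delta(S) \in O(f(S))$ and $f(S) \in O(\delta(S)\log(1+|S|/\delta(S)))$ for $f \in \{z, v\}$ in the preliminaries, the $\Omega(\log n)$ lower bound is exactly Theorem~\ref{thm:lex_lz_lowerbound}, and your remark about the growing alphabet and the fixed order $\prec$ (with $\#$ largest) spells out a convention the paper leaves implicit.
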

 \subsection{Size Bounds for Height-Bounded LZ Factorizations} \label{sse:heightbounded}

In this subsection, we compare the minimum size of height-bounded LZ-like factorizations~\cite{BannaiFHMP24,LiptakM024} with the size of the LZ factorization.

Let $\calF = (F_1, \dots, F_f)$ be an LZ-like factorization of a string $S$ of length $n$.
For each character $c$ occurring in $S$, let $\ell_c = \min \{ i \in \Z_n \mid S[i] = c \}$.
Since every phrase of length at least two has an earlier occurrence, the phrase containing $\ell_c$ is the single character $c$.
For each phrase $F_j$ with an earlier occurrence, fix a source position $s_j < b_j$ such that $F_j = S[s_j..s_j + |F_j| - 1]$.
An \emph{LZ-like encoding} is a tuple $\calE = (\calF, s_1, \dots, s_f)$ consisting of an LZ-like factorization $\calF$ and the source positions fixed above.
For the phrase $F_j$ containing the leftmost occurrence of a character, we set $s_j = \bot$.
We define the size of $\calE$ as $|\calE| = f$.
The transition function $\tau_{\calE}: \Z_n \cup \{ \bot \} \to \Z_n \cup \{ \bot \}$ satisfies $\tau_{\calE}(\bot) = \bot$ and is defined for each position $b_j \leq i < b_j + |F_j|$ by
\[
\tau_{\calE}(i) =
\begin{cases}
\bot & \text{if $s_j = \bot$}, \\
s_j + ((i - b_j) \bmod (b_j - s_j)) & \text{otherwise}.
\end{cases}
\]

For every position $i$ with $\tau_{\calE}(i) \neq \bot$, we have $\tau_{\calE}(i) < i$ and $S[\tau_{\calE}(i)] = S[i]$.
The edges between $i$ and $\tau_{\calE}(i)$ for all such positions form a forest with one rooted tree for each character occurring in $S$.
The tree for a character $c$ contains exactly its occurrences in $S$ and is rooted at its leftmost occurrence.
The \emph{depth} of a position $i \in \Z_n$ is its depth in this forest.
Equivalently, $\depth_{\calE}(i) = \min \{ k \in \mathbb{Z}_{\geq 0} \mid \tau_{\calE}^{k + 1}(i) = \bot \}$.
When $\calE$ is clear from context, we write $\tau$ and $\depth(i)$ instead of $\tau_{\calE}$ and $\depth_{\calE}(i)$, respectively.
The \emph{height} of an LZ-like encoding $\calE$ is defined as $\height(\calE) = \max_{i \in \Z_n} \depth_{\calE}(i)$.
Our definition requires exactly one phrase with $s_j = \bot$ for each character.
An LZ-like encoding defined without this restriction can be converted into one under our definition by making each additional phrase refer to the leftmost occurrence of its character.
This conversion does not increase the size and increases the height by at most one.
For any positive integer $h$ and a string $S$, let $\hat{z}_{h}(S)$ be the size of the smallest LZ-like encoding whose height is at most $h$.
Since every LZ-like encoding has height at most $n$, $\hat{z}_{n}(S) = z(S)$ holds for every string $S$.

\subsubsection*{Cost Bounds for Height-Bounded Increasing Trees}
An \emph{increasing tree}~\cite{BergeronFS92} on $n$ vertices is a rooted tree $\calT$ with vertex set $\Z_n$ and root $0$ such that every vertex $i \in \Z_n \setminus \{0\}$ has a parent $p_{\calT}(i) < i$.
The \emph{depth} of each vertex $i \in \Z_n$ is the number of edges on the path from $i$ to the root and is denoted by $\depth_{\calT}(i)$.
The \emph{height} of $\calT$ is $\height(\calT) = \max_{i \in \Z_n} \depth_{\calT}(i)$.
The edge between $i$ and $p_{\calT}(i)$ has length $i - p_{\calT}(i)$.
The \emph{cost} of $\calT$ is defined as $\cost(\calT) = \sum_{i = 1}^{n - 1} (i - p_{\calT}(i))$.
For integers $n \geq 1$ and $h \geq 0$, let $C(n, h)$ be the minimum cost of an increasing tree on $n$ vertices with height at most $h$.
The only increasing tree of height $0$ consists of a single vertex, so $C(1, 0) = 0$ and $C(n, 0) = \infty$ for every $n \geq 2$.
For all $n \geq 1$ and $h \geq 0$, $C(n, h) \leq C(n + 1, h)$ and $C(n, h) \geq C(n, h + 1)$ hold.

We first express $C(r, h)$ in terms of the minimum costs of two smaller increasing trees.
\begin{lemma} \label{lem:recursive_tree_recurrence}
For any positive integer $h$ and any integer $r \geq 2$,
\[
C(r, h) = \min_{1 \leq k < r} \left( C(k, h) + k + C(r - k, h - 1) \right).
\]
\end{lemma}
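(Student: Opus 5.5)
The proof will establish the two inequalities separately. The upper bound $C(r,h) \leq C(k,h) + k + C(r-k,h-1)$ for each $1 \leq k < r$ comes from a gluing construction. The lower bound comes from decomposing an optimal tree at one child of the root and normalising it by an exchange argument.

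\emph{Upper bound.} Fix $1 \leq k < r$, and take optimal trees $\calT_1$ on $k$ vertices of height at most $h$ and $\calT_2$ on $r-k$ vertices of height at most $h-1$. We will place $\calT_1$ on the vertices $[0,k-1]$ unchanged. We then place $\calT_2$ on $[k,r-1]$ by shifting every label by $k$, and make its root $k$ a child of $0$. Every parent still has a smaller label, so the result is an increasing tree. Depths in the $\calT_2$ part grow by exactly one, so the height is at most $h$. Edge lengths inside each part are unchanged, and the new edge has length $k$. The cost is therefore exactly $C(k,h) + k + C(r-k,h-1)$. Minimising over $k$ gives the claimed inequality.

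\emph{Lower bound.} Let $\calT$ be an optimal tree on $r \geq 2$ vertices of height at most $h$. Let $c$ be the largest-labelled child of the root, let $D$ be the vertex set of the subtree rooted at $c$, and let $A = \Z_r \setminus D$. Every vertex of $D$ has label at least $c$, since labels increase downward. The intended decomposition uses $k = |A|$. The induced subtree on $A$, relabelled order-preservingly, is an increasing tree of height at most $h$. The induced subtree on $D$, relabelled the same way, has height at most $h-1$. Compressing labels never lengthens an edge. It remains to show that $\cost(\calT)$ is at least the sum of the two compressed costs plus $|A|$.

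We plan to reduce to the case $D = [|A|, r-1]$, where the bound is immediate. In that case no edge inside $A$ or inside $D$ is shortened, and the root edge has length $c = |A|$. To reach it, we will repeatedly take an adjacent pair $y = x-1 \in D$, $x \in A$ and swap the two labels. This keeps the tree increasing: the parent of $x$ lies in $A$ and so is below $y$, and the children of $y$ lie in $D$ and are above $x$. The swap also keeps all depths.

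The main obstacle is that a single swap changes the cost by
\[
(\#\text{children of }x) - 1 + 1 - (\#\text{children of }y),
\]
which need not be nonpositive. Two ways around this:
\begin{itemize}
  \item Use a different decomposition vertex, for example the child of the root whose subtree has the fewest children per vertex.
  \item Instead of swapping, directly charge $|A| - c$ units of length to edges crossing the gaps at $D$-positions. For each $A$-vertex $j > c$, the path in $A$ from $0$ to $j$ must cross every gap of $D$ below $j$.
\end{itemize}
The first thing to try is the charging argument, since it avoids case analysis on the swaps.
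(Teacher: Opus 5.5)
Your upper bound (gluing $\calT_2$, shifted by $k$, under the root) is exactly the paper's construction and is fine. The lower bound, however, is not proved. You reduce it to showing $\cost(\calT) \geq \cost(\calT_A) + |A| + \cost(\calT_D)$ for the compressed pieces, note that the label swaps do not control cost, and then leave two alternatives unexecuted. The charging argument you plan to try first cannot work as described, because it never uses optimality of $\calT$, and the inequality it targets is false for general trees.

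Take $r=5$ with parents $p(1)=0$, $p(2)=0$, $p(3)=1$, $p(4)=3$, so the height is $3$. Then $c=2$, $D=\{2\}$, $A=\{0,1,3,4\}$, and $\cost(\calT)=6$. The compressed tree on $A$ is a path of cost $3$, so $C(|A|,3)+|A|+C(|D|,2)=3+4+0=7>6$. The only edge crossing the $D$-position $2$ is $(1,3)$, and it is shared by the paths to both $A$-vertices $3$ and $4$ above $c$. Only one unit can be charged, although $|A|-c=2$. This tree is not optimal ($C(5,3)=5$), and that is the point: any valid argument must either use optimality or modify the tree.

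The missing idea is a reparenting exchange rather than a label swap. For each $j>c$ with $p_{\calT}(j)<c$, make $c$ the parent of $j$. The edge length drops from $j-p_{\calT}(j)$ to $j-c$. The depth of $j$ becomes $2$, which is at most its old depth, because $p_{\calT}(j)\neq 0$ by the maximality of $c$. Depths inside the subtree of $j$ therefore do not increase either.

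After this change, which is exactly what the paper does, every vertex $j>c$ has a parent at least $c$. So the subtree of $c$ is exactly $[c,r-1]$ and the rest of the tree lives on $[0,c-1]$. Deleting the edge $(0,c)$, of length $c$, then gives the bound with $k=c$, with no relabelling needed.

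Since the reattached edge strictly shortens, no such $j$ can exist at all in an optimal $\calT$. The interleaved configuration you were trying to handle with swaps or charging therefore never arises.
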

\begin{proof}
Let $\calT$ be an increasing tree on $r$ vertices of height at most $h$ such that $\cost(\calT) = C(r, h)$.
Let $k$ be the largest child of the root $0$ in $\calT$.
We first transform $\calT$ so that deleting the edge between $0$ and $k$ separates the vertex sets $\{0,\ldots,k-1\}$ and $\{k,\ldots,r-1\}$.

Let $J = \{j \in \Z_r \mid j>k \text{ and } p_{\calT}(j)<k\}$.
The maximality of $k$ implies that $p_{\calT}(j) \neq 0$ for every $j \in J$.
Let $\calT'$ be the increasing tree defined by
\[
p_{\calT'}(x) =
\begin{cases}
k & \text{if } x \in J, \\
p_{\calT}(x) & \text{if } x \in \{1, \dots, r - 1\} \setminus J.
\end{cases}
\]
For every $j \in J$, the edge to its parent has length $j-p_{\calT}(j)$ in $\calT$ and $j-k$ in $\calT'$.
Since $p_{\calT}(j)<k$, these lengths satisfy $j-k \leq j-p_{\calT}(j)$.
All other edges have the same length in $\calT$ and $\calT'$, so $\cost(\calT') \leq \cost(\calT)$.
For every $j \in J$, we have $\depth_{\calT'}(j) = \depth_{\calT'}(k) + 1 = 2$, while $\depth_{\calT}(j) = \depth_{\calT}(p_{\calT}(j)) + 1 \geq 2$.
The latter inequality follows from $p_{\calT}(j) \neq 0$.
The subtrees rooted at the vertices in $J$ are pairwise disjoint, and all parent relations within each subtree are the same in $\calT$ and $\calT'$.
Hence, the depth of every vertex in these subtrees does not increase.
Every vertex outside these subtrees has the same path to the root in both trees, so its depth also does not increase.
Thus, $\depth_{\calT'}(x) \leq \depth_{\calT}(x)$ for every $x \in \Z_r$.
Overall, the cost of $\calT'$ is at most that of $\calT$, and the depth of each vertex in $\calT'$ is at most its depth in $\calT$.

There are no edges in $\calT'$ between $\{0,\ldots,k-1\}$ and $\{k,\ldots,r-1\}$ except for the edge between $0$ and $k$.
Deleting this edge therefore gives an increasing tree $\calT_L$ on $\{0,\ldots,k-1\}$ and an increasing tree $\calT_R$ rooted at $k$ on $\{k,\ldots,r-1\}$.
Relabeling each vertex $x$ of $\calT_R$ by $x-k$ gives an increasing tree on $r-k$ vertices rooted at $0$.
Since every vertex of $\calT_L$ has the same depth as in $\calT'$ and every vertex of $\calT_R$ has depth one less than in $\calT'$, $\height(\calT_L) \leq h$ and $\height(\calT_R) \leq h-1$.
The deleted edge has length $k$, so the decomposition into $\calT_L$ and $\calT_R$ gives
\begin{align*}
C(r, h)
  &= \cost(\calT) \\
  &\geq \cost(\calT') \\
  &= \cost(\calT_L) + k + \cost(\calT_R) \\
  &\geq C(k, h) + k + C(r - k, h - 1) \\
  &\geq \min_{1 \leq k' < r} \left(C(k', h) + k' + C(r - k', h - 1)\right).
\end{align*}

For the reverse inequality, let
$k \in \argmin_{1 \leq k' < r} \left(C(k', h) + k' + C(r - k', h - 1)\right)$.
Let $\calT_L$ be a minimum-cost increasing tree on $k$ vertices of height at most $h$,
and let $\calT_R$ be a minimum-cost increasing tree on $r-k$ vertices of height at most $h-1$.
By adding $k$ to every vertex label of $\calT_R$ and then joining the roots $0$ and $k$ by an edge, we combine $\calT_L$ and $\calT_R$.
The resulting tree is an increasing tree on $r$ vertices of height at most $h$ and has cost
$C(k, h) + k + C(r - k, h - 1)$, which proves the claimed equality.
\end{proof}

We then use this expression to derive a lower bound with an auxiliary parameter $q$.
\begin{lemma} \label{lem:recursive_tree_bound}
For any integers $h, q \geq 0$ and $r \geq 1$,
$C(r, h) \geq q (r - (h + 1)^q)$ holds.
\end{lemma}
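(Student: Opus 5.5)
We argue by induction, primarily on $q$ and secondarily on $h$ and $r$, using the recurrence of Lemma~\ref{lem:recursive_tree_recurrence}.

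\textbf{Base cases.} For $q = 0$ the right-hand side is $0$, and every cost is nonnegative, so the bound holds. For $h = 0$ we have $(h+1)^q = 1$. If $r = 1$, the right-hand side is $q(1-1) = 0 \le C(1,0)$. If $r \ge 2$, then $C(r,0) = \infty$ and there is nothing to prove. For $r \le (h+1)^q$ the right-hand side is nonpositive, so the bound is trivial. In particular, every case with $r = 1$ is covered.

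\textbf{Inductive step.} Fix $h \ge 1$, $q \ge 1$ and $r \ge 2$. Assume the bound for all smaller $r$ with the same $(h,q)$, and for all smaller $h$ and every $q$. Lemma~\ref{lem:recursive_tree_recurrence} supplies some $1 \le k < r$ with
\[
C(r,h) = C(k,h) + k + C(r-k,h-1).
\]
We bound each term separately.
\begin{itemize}
\item The first term satisfies $C(k,h) \ge q(k - (h+1)^q)$ by induction on $r$.
\item For the last term we use the parameter $q-1$ at height $h-1$: $C(r-k,h-1) \ge (q-1)\bigl(r-k-h^{q-1}\bigr)$.
\end{itemize}
Adding these together with the middle term $k$ gives
\[
C(r,h) \ge qk - q(h+1)^q + k + (q-1)(r-k) - (q-1)h^{q-1}.
\]
The goal is $qr - q(h+1)^q$. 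Subtracting the goal from the right-hand side leaves
\[
qk + k + (q-1)(r-k) - qr - (q-1)h^{q-1} = 2k - r - (q-1)h^{q-1}.
\]
This is nonnegative only when $k$ is large, so the argument must split into two cases.

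\textbf{Case of large $k$.} The inequality $2k - r \ge (q-1)h^{q-1}$ is too demanding in general, so we instead use parameter $q$ for the second subtree as well:
\[
C(r-k,h-1) \ge q\bigl(r-k-h^q\bigr).
\]
The three terms then sum to $qr + k - q(h+1)^q - qh^q$. This is at least the goal whenever $k \ge qh^q$.

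\textbf{Case of small $k$.} When $k < qh^q$ we use the $(q-1)$ bound on the second subtree, but the first subtree must then be handled differently. Drop $C(k,h) \ge 0$ entirely and write
\[
C(r,h) \ge k + (q-1)\bigl(r-k-h^{q-1}\bigr).
\]
We need this to be at least $qr - q(h+1)^q$. Equivalently,
\[
r \le q(h+1)^q - (q-1)h^{q-1} + k - (q-1)k.
\]
This fails for large $r$, so the split by $k$ alone does not close.

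\textbf{Adjusted strategy.} The real approach should use the inductive hypothesis on $C(k,h)$ with $q$ and on $C(r-k,h-1)$ with $q-1$, together with a case split on whether $k \ge (h+1)^q - h^{q-1}$ or not. Note that the binomial inequality $(h+1)^q \ge (h+1)h^{q-1} \ge h^{q-1} + h^{q}$ gives some slack, since $(h+1)^q - h^{q-1} \ge h^q$.

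\textbf{Main obstacle.} The key difficulty is precisely this balancing. The term $k$ paid for the root edge must compensate for losing one unit of $q$ on the $r-k$ vertices of the shorter subtree. That loss is $(r-k)$ minus the threshold gain $q(h+1)^q - (q-1)h^{q-1}$. I expect the clean route is to choose, for each $k$, between the parameters $q$ and $q-1$ for the subtree of height $h-1$, whichever gives the better bound. We then verify that the maximum of the two linear bounds in $k$ always dominates $qr - q(h+1)^q$, using $(h+1)^q \ge h^q + q h^{q-1}$.
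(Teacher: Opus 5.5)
Your proposal has a genuine gap: it never completes the inductive step, and none of the options you consider can close it. In every variant you either bound the left piece $C(k,h)$ with parameter $q$ or drop it, and you vary only the parameter on the right piece $C(r-k,h-1)$. These choices can all fail at once. Take $h=3$, $q=2$, $r=100$, $k=17$; the goal is $q(r-(h+1)^q)=168$. Your four bounds give:
\begin{itemize}
\item $167$ with $q$ on both pieces (slack $k-qh^q=-1$);
\item $99$ with $q$ on the left and $q-1$ on the right (slack $2k-r-(q-1)h^{q-1}=-69$);
\item $165$ after dropping $C(k,h)$ and using $q$ on the right;
\item $97$ after dropping $C(k,h)$ and using $q-1$ on the right.
\end{itemize}
You do not control the minimizing $k$ in Lemma~\ref{lem:recursive_tree_recurrence}, so the argument must work for every $k$. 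Your closing claim, that the better of your linear bounds always dominates $q(r-(h+1)^q)$, is therefore false. The binomial inequality $(h+1)^q \geq h^q + qh^{q-1}$ does not repair it.

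The missing idea is to see which subtree the root edge pays for. The edge has length $k$, which is the number of vertices of the \emph{left} tree. So it should compensate for lowering $q$ on $C(k,h)$, not on the $r-k$ vertices of the shorter tree as your last paragraph suggests. Apply the hypothesis with $q-1$ to $C(k,h)$ and with $q$ to $C(r-k,h-1)$:
\[
C(r,h) \geq (q-1)(k-(h+1)^{q-1}) + k + q(r-k-h^q) = qr-(q-1)(h+1)^{q-1}-qh^q .
\]
This no longer depends on $k$, so no case split is needed. Since $h^q\le h(h+1)^{q-1}$, we get $(q-1)(h+1)^{q-1}+qh^q\le (q-1+qh)(h+1)^{q-1}\le q(h+1)^q$, which finishes the step. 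This is exactly the paper's argument; in the example above it gives $178 \geq 168$. It uses the hypothesis at $(k,h,q-1)$, i.e., the same height and a smaller $q$. Your stated induction (smaller $r$ only for the same pair $(h,q)$) does not supply that case. Inducting on $q+h$ for all $r$, as the paper does, covers both pieces.
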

\begin{proof}
We prove the claim for all $r \geq 1$ by induction on $q + h$.
If $r = 1$, then $C(1, h) = 0$, while $q(1 - (h + 1)^q) \leq 0$.
We may therefore assume that $r \geq 2$.
If $q = 0$, the right-hand side is $0$, and the claim follows from the nonnegativity of $C(r, h)$.
If $h = 0$, then $C(r, 0) = \infty$, so the claim also holds.
It remains to consider $h, q \geq 1$.
Let $k$ attain the minimum of $C(k', h) + k' + C(r - k', h - 1)$ over $1 \leq k' < r$.
By Lemma~\ref{lem:recursive_tree_recurrence}, $C(r, h) = C(k, h) + k + C(r - k, h - 1)$.
The induction hypothesis with parameter $q - 1$ gives $C(k, h) \geq (q - 1)(k - (h + 1)^{q - 1})$, and the same hypothesis with parameter $q$ gives $C(r - k, h - 1) \geq q(r - k - h^q)$.
Moreover, $h^q \leq h(h + 1)^{q - 1}$ implies
$(q - 1)(h + 1)^{q - 1} + q h^q
\leq (q - 1 + q h)(h + 1)^{q - 1}
\leq q(h + 1)^q$.
The equality for $C(r, h)$ and the preceding inequalities imply
\begin{align*}
C(r, h)
  &= C(k, h) + k + C(r - k, h - 1) \\
  &\geq (q - 1)(k - (h + 1)^{q - 1}) + k + q(r - k - h^q) \\
  &= q r - (q - 1)(h + 1)^{q - 1} - q h^q \\
  &\geq q(r - (h + 1)^q),
\end{align*}
which proves the claim.
\end{proof}

By Lemma~\ref{lem:recursive_tree_bound}, we obtain the following lower bound on $C(r, h)$.
\begin{theorem} \label{thm:recursive_tree_cost_bound}
For all integers $r \geq 2$ and $h \geq 1$,
\[
C(r, h) \geq \frac{r}{2} \left\lfloor \log_{h + 1} \frac{r}{2} \right\rfloor.
\]
\end{theorem}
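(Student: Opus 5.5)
We derive the bound directly from Lemma~\ref{lem:recursive_tree_bound} by choosing the auxiliary parameter $q$ so that the subtracted term $(h+1)^q$ is at most half of $r$. Concretely, set
\[
q = \left\lfloor \log_{h+1} \frac{r}{2} \right\rfloor .
\]
Since $r \geq 2$, we have $r/2 \geq 1$, so $\log_{h+1}(r/2) \geq 0$ and $q$ is a nonnegative integer. The lemma is therefore applicable for all integers $h, q \geq 0$ and $r \geq 1$.

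By the choice of $q$ and the monotonicity of $t \mapsto (h+1)^t$ (valid since $h+1 \geq 2$), we have $(h+1)^q \leq (h+1)^{\log_{h+1}(r/2)} = r/2$. Hence $r - (h+1)^q \geq r/2$. Lemma~\ref{lem:recursive_tree_bound} then gives
\[
C(r,h) \geq q\bigl(r - (h+1)^q\bigr) \geq q \cdot \frac{r}{2} = \frac{r}{2}\left\lfloor \log_{h+1}\frac{r}{2}\right\rfloor .
\]
The second inequality uses $q \geq 0$, which lets us multiply the bound $r - (h+1)^q \geq r/2$ by $q$.

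The only point needing care is the edge case $q = 0$, when $r/2 < h+1$. There the claimed bound is $0$, and it holds trivially because costs are nonnegative; the displayed chain also covers this case. No step is a real obstacle: the content lies entirely in Lemma~\ref{lem:recursive_tree_bound}, and this theorem amounts to optimizing its free parameter.
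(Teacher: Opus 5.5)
Your proposal is correct and is the paper's proof: choose $q = \lfloor \log_{h+1}(r/2) \rfloor$ so that $(h+1)^q \leq r/2$, then apply Lemma~\ref{lem:recursive_tree_bound} to get $C(r,h) \geq q(r - (h+1)^q) \geq qr/2$. Your extra checks, that $q \geq 0$ and that the base $h+1 \geq 2$ makes the exponential monotone, are details the paper leaves implicit.
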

\begin{proof}
Set $q = \left\lfloor \log_{h + 1} (r / 2) \right\rfloor$, so $(h + 1)^q \leq r / 2$.
By Lemma~\ref{lem:recursive_tree_bound}, $C(r, h) \geq q (r - (h + 1)^q) \geq q r / 2$, which proves the stated inequality.
\end{proof}

\subsubsection*{Size Bounds for Height-Bounded LZ-Like Encodings}
For each integer $\alpha \geq 3$, let $m = 2^\alpha$.
For these parameters, let $S = R_1 \cdots R_{m - 1}$ be the string defined in Subsection~\ref{sse:pref_deletion}.
The string $S$ consists of $m - 1$ blocks of length $m + 1$, so $|S| = m^2 - 1$.
Theorem~\ref{thm:z_ST} and the monotonicity of $z$ under appending a suffix give $z(S) \leq z(ST) \leq 5m - 5$, so $z(S) \in O(m)$.

Let $\calE = (\calF, s_1, \dots, s_f)$ be an arbitrary LZ-like encoding of $S$ with $f$ phrases.

The following lemma bounds the length of a common substring of two blocks in terms of the difference between their indices.
\begin{lemma} \label{lem:block_lcs_bound}
Let $s$ and $t$ be positive integers satisfying $1 \leq s < t < m$.
Then, the length of the longest common substring between $R_s$ and $R_t$
is at most $4m/(t - s)$.
\end{lemma}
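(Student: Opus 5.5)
The plan is to take a common substring $W$ of $R_s$ and $R_t$ of length $L$, split into two cases according to whether $W$ contains some character $\mathtt{a}_i$, and in both cases reduce to one combinatorial fact about the bit-reversal permutation on intervals of positions.

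\emph{Key fact.} Let $I$ be a set of $L$ consecutive positions in $\Z_m$, and let $j$ be the largest integer with $2^j \le L/2$, so $2^j > L/4$. Then $I$ contains an aligned dyadic block $\{c2^j, \dots, c2^j + 2^j - 1\}$. Its positions share their top $\alpha - j$ bits. Hence their images under $\pi_\alpha$ share their low $\alpha - j$ bits. So $\pi_\alpha$ maps this block onto a full residue class modulo $m/2^j$ inside $\Z_m$, i.e.\ an arithmetic progression with step $m/2^j$ starting below $m/2^j$. Consequently, every interval of values in $\Z_m$ of length at least $m/2^j$ whose left end is below $m/2^j$, or more generally any interval of length $\ge m/2^j$ inside $\Z_m$, contains $\pi_\alpha(i)$ for some $i \in I$.

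\emph{Case 1: $W$ contains some $\mathtt{a}_i$.*} First, $W$ cannot contain a terminal character, since $\mathtt{\$}_s \ne \mathtt{\$}_t$ each occur only at the end of their own block. Because $\mathtt{a}_i$ occurs only at offset $i$, the argument of Observation~\ref{obs:equal_substring_offsets} shows that $W$ starts at the same offset $p$ in both blocks. Then $R_s[i] = R_t[i]$ for all $i$ in $I = [p, p+L-1]$. This means that no $i \in I$ satisfies $s \le \pi_\alpha(i) < t$. The value interval $[s, t-1]$ has length $t - s$. By the key fact, if $m/2^j \le t - s$, some $\pi_\alpha(i)$ would land in it. Hence $m/2^j > t - s$, and so $L < 4 \cdot 2^j < 4m/(t-s)$.

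\emph{Case 2: $W$ contains no $\mathtt{a}_i$.} Here $W = \mathtt{b}^L$ (again no terminal), and it occurs as a run of $\mathtt{b}$'s in $R_t$ on some interval $I$ of positions. This gives $\pi_\alpha(i) \ge t$ for all $i \in I$. By the key fact, the progression of step $m/2^j$ has an element below $m/2^j$, so we need $t \le$ that element $< m/2^j$. Thus $2^j < m/t$ and $L < 4m/t \le 4m/(t-s)$.

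The main point of care is the key fact: choosing $j$ so that an aligned dyadic block fits inside an arbitrary interval, which costs the factor $4$, and checking that bit reversal turns a prefix-sharing block into a residue class. Everything else is direct case analysis; the empty string case is trivial.
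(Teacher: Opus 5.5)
Your proof is correct and is essentially the paper's argument run in the dual direction. The paper fits an aligned dyadic block inside the value interval $[s,t-1]$ and uses that $\pi_\alpha$ sends it to a progression of positions with step $d<4m/(t-s)$, which every window of length $d$ must hit. You instead fit the block inside the window of positions occupied by $W$ and map it to a full residue class of values, with the same two cases (aligned occurrences versus an all-$\mathtt{b}$ run in $R_t$) and the same loss of a factor $4$. The only loose end is that your choice of $j$ needs $L\ge 2$, but any $L\le 4$ satisfies the bound trivially since $t-s<m$.
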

\begin{proof}
For every $i \in \Z_m$, the definitions of $R_s$ and $R_t$ give $R_s[i] \neq R_t[i]$ if and only if $s \leq \pi_\alpha(i) < t$.
Since $\pi_\alpha = \pi_\alpha^{-1}$ and $R_s[m] \neq R_t[m]$, the set of their mismatch positions is
$\{\pi_\alpha(x) \mid s \leq x < t\} \cup \{m\}$.
For every $s \leq x < t$, we have $R_s[\pi_\alpha(x)] = \mathtt{b}$ and $R_t[\pi_\alpha(x)] = \mathtt{a}_{\pi_\alpha(x)}$.

If $t - s \leq 2$, the claim is immediate.
Suppose that $t - s \geq 3$.
Let $q$ be the unique nonnegative integer satisfying $2^{q + 1} \leq t - s < 2^{q + 2}$.
Since $[s, t - 1]$ contains at least $2^{q + 1}$ consecutive integers, it contains $I = [a2^q, (a + 1)2^q - 1]$ for some integer $a$.
The inequality $t - s < 2^{q + 2}$ gives $(t - s)/4 < 2^q$.

For $x = a2^q + j$ with $j \in \Z_{2^q}$, the definition of the bit-reversal permutation gives $\pi_\alpha(x) = \pi_q(j)2^{\alpha - q} + \pi_{\alpha - q}(a)$.
Since $\pi_q$ is a permutation, these positions are exactly $\pi_{\alpha - q}(a) + j d$ for $j \in \Z_{2^q}$, where $d = 2^{\alpha - q}$.
Thus, every interval of $d$ consecutive positions in $\Z_m$ contains $\pi_\alpha(x)$ for some $x \in I$.
The inequality $2^q > (t - s)/4$ gives $d = m/2^q < 4m/(t - s)$.

Suppose that $W$ is a common substring of $R_s$ and $R_t$ with $|W| \geq d$.
If its occurrences start at the same offset, they contain the same mismatch position, either $m$ or $\pi_\alpha(x)$ for some $x \in I$.
If they start at different offsets, then $W$ consists only of $\mathtt{b}$ by Observation~\ref{obs:equal_substring_offsets}.
However, its occurrence in $R_t$ contains $\pi_\alpha(x)$ for some $x \in I$, where $R_t[\pi_\alpha(x)] = \mathtt{a}_{\pi_\alpha(x)}$.
Both cases contradict the equality of the two occurrences, so every common substring has length less than $d < 4m/(t - s)$.
\end{proof}

For each $x \in \Z_{m - 1}$, we define an increasing tree from the transitions among the occurrences of $\mathtt{a}_x$.
The character $\mathtt{a}_x$ occurs at offset $x$ in block $R_t$ precisely when $\pi_\alpha(x) < t < m$, and hence it has $n_x = m - 1 - \pi_\alpha(x)$ occurrences in $S$.
For each $j \in \Z_{n_x}$, let $o_{x,j} = (\pi_\alpha(x) + j)(m + 1) + x$ be the position of $\mathtt{a}_x$ in block $R_{\pi_\alpha(x) + j + 1}$.
The leftmost occurrence satisfies $\tau_{\calE}(o_{x,0}) = \bot$.
For every $1 \leq j < n_x$, let $k_{x,j} \in \Z_j$ be the unique integer satisfying $\tau_{\calE}(o_{x,j}) = o_{x,k_{x,j}}$.
The \emph{character-wise increasing tree} of $\mathtt{a}_x$ induced by $\calE$ is the increasing tree $\calT_x$ with $n_x$ vertices in which each vertex $j \in \Z_{n_x} \setminus \{0\}$ has parent $p_{\calT_x}(j) = k_{x,j}$.

We first state some basic properties of character-wise increasing trees.
By the definition of $\calT_x$, $\depth_{\calT_x}(j) = \depth_{\calE}(o_{x,j})$ for every $x \in \Z_{m - 1}$ and $j \in \Z_{n_x}$.
Thus, $\height(\calT_x) \leq \height(\calE)$ for every $x \in \Z_{m - 1}$.
The equality $n_x = m - 1 - \pi_\alpha(x)$ implies that $n_x \geq m/2$ if and only if $\pi_\alpha(x) \leq m/2 - 1$.
Since $\pi_\alpha$ is a permutation of $\Z_m$ and $\pi_\alpha(m - 1) = m - 1$, exactly $m/2$ indices $x \in \Z_{m - 1}$ satisfy this inequality.
Thus, exactly $m/2$ trees $\calT_x$ satisfy $n_x \geq m/2$, and the other $m/2 - 1$ trees satisfy $n_x < m/2$.

We can bound the total cost of character-wise increasing trees as follows.
\begin{lemma} \label{lem:cost_4mf}
Let $C_{\calE} = \sum_{x \in \Z_{m - 1}} \cost(\calT_x)$.
Then, $C_{\calE} \leq 4mf$.
\end{lemma}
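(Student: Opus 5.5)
We will bound the cost phrase by phrase. Each edge of a tree $\calT_x$ comes from one position $o_{x,j}$ inside some phrase $F_\ell$, and its length is $j - k_{x,j}$. Since $\tau_{\calE}(o_{x,j}) = o_{x,k_{x,j}}$ and both positions sit at offset $x$ in their blocks, the source of the copy lies exactly $j - k_{x,j}$ blocks earlier. So the plan is to charge every edge to the phrase containing $o_{x,j}$ and to show that the total edge length charged to a single phrase is at most $4m$. Summing over the $f$ phrases then gives $C_{\calE} \leq 4mf$.

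Fix a copy phrase $F_\ell$ with source $s_\ell$, and let $\Delta = b_\ell - s_\ell > 0$. First consider the case $\Delta < |F_\ell|$, where the phrase overlaps its own source. A phrase of length at least two lies inside a block because of the unique terminal characters, and a block contains each $\mathtt{a}_i$ at most once. Two occurrences of the same character inside one phrase's span therefore cannot both be $\mathtt{a}$-characters. Hence an overlapping phrase contains no $\mathtt{a}_x$ unless the offsets force equality, so it contributes nothing; equivalently, it consists only of $\mathtt{b}$'s. Single-character phrases either contain the leftmost occurrence of a character or are handled by the same reasoning as a length-one copy.

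In general, by Observation~\ref{obs:equal_substring_offsets}, if $F_\ell$ contains some $\mathtt{a}_i$, its source starts at the same offset in an earlier block. So $\Delta = (t-s)(m+1)$, where the phrase lies in block $R_t$ and the source lies in block $R_s$ with $s<t$. Every $\mathtt{a}_x$ in the phrase then gets an edge of length exactly $t-s$. The phrase is a common substring of $R_s$ and $R_t$, so Lemma~\ref{lem:block_lcs_bound} gives $|F_\ell| \leq 4m/(t-s)$. The number of $\mathtt{a}$-positions in $F_\ell$ is at most $|F_\ell|$, so the total edge length charged to $F_\ell$ is at most $|F_\ell|(t-s) \leq 4m$. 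Phrases without any $\mathtt{a}_x$, such as pure $\mathtt{b}$ runs or terminals, are charged $0$. Single-character phrases containing an $\mathtt{a}_x$ are handled by the same argument: either $\tau = \bot$, or the source is at the same offset in some block $s<t$. In the latter case the contribution is $t-s \leq m \leq 4m$.

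The main obstacle is to verify that $\tau_{\calE}$ on an $\mathtt{a}_x$ position really lands on the same offset $t-s$ blocks earlier, so that the edge length equals $t-s$. This is where the modulo in the definition of $\tau$ matters. Since an $\mathtt{a}$-containing phrase cannot overlap its source (the overlap would require $\Delta \leq |F_\ell| \leq m$ while $\Delta$ is a positive multiple of $m+1$), we get $(i-b_\ell) \bmod \Delta = i-b_\ell$, and thus $\tau(i) = i - \Delta$. Summing the per-phrase charge of at most $4m$ over all $f$ phrases yields $C_{\calE} \leq 4mf$.
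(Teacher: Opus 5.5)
Your proof is correct and follows the paper's argument. You charge each edge to the phrase containing $o_{x,j}$, use Observation~\ref{obs:equal_substring_offsets} to place the source at the same offset $t-s$ blocks earlier, and combine the edge length $t-s$ with $|F_\ell| \leq 4m/(t-s)$ from Lemma~\ref{lem:block_lcs_bound} to bound each phrase's charge by $4m$. Your explicit check that $\Delta = (t-s)(m+1) > |F_\ell|$, which makes the modulo in $\tau_{\calE}$ trivial, is a detail the paper leaves implicit, and it also makes your separate, loosely worded overlap case redundant.
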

\begin{proof}
By the definitions of $C_{\calE}$ and $\cost(\calT_x)$,
$C_{\calE} = \sum_{x \in \Z_{m - 1}} \cost(\calT_x) = \sum_{x \in \Z_{m - 1}} \sum_{j = 1}^{n_x - 1} (j - k_{x,j})$.
For each phrase $F_i$, define the contribution $C_{\calE}(F_i)$ of $F_i$ to $C_{\calE}$ as the sum of the terms $j - k_{x,j}$ over all $x \in \Z_{m - 1}$ and $1 \leq j < n_x$ satisfying $b_i \leq o_{x,j} < b_i + |F_i|$.
Since $S = F_1 \cdots F_f$, every position belongs to exactly one phrase, and hence $C_{\calE} = \sum_{i = 1}^f C_{\calE}(F_i)$.

Let $F_i$ be a phrase satisfying $C_{\calE}(F_i) > 0$.
The inequality $C_{\calE}(F_i) > 0$ implies that $F_i$ contains a non-leftmost occurrence of some $\mathtt{a}_x$.
Thus, $s_i \neq \bot$.
By Observation~\ref{obs:equal_substring_offsets}, $F_i$ is contained in a block $R_t$ and its source is contained in a block $R_s$.
Their starting positions have the same offset within these blocks.
Since $s_i < b_i$ and the two starting positions have the same offset, $1 \leq s < t < m$.

Consider a term $j - k_{x,j}$ included in $C_{\calE}(F_i)$.
The positions $o_{x,j}$ and $o_{x,k_{x,j}}$ are in blocks $R_t$ and $R_s$, respectively.
Since $\mathtt{a}_x$ occurs once in each block from $R_s$ through $R_t$, $j - k_{x,j} = t - s$.
Since $F_i$ contains only $|F_i|$ characters and $|F_i| \leq 4m / (t-s)$ by Lemma~\ref{lem:block_lcs_bound}, $C_{\calE}(F_i) \leq |F_i|(t-s) \leq 4m$.
The same bound is immediate when $C_{\calE}(F_i) = 0$, and therefore $C_{\calE} = \sum_{i = 1}^f C_{\calE}(F_i) \leq 4mf$.
\end{proof}

We now prove the main theorem that bounds the number of phrases in height-bounded LZ-like encodings.
\begin{theorem} \label{thm:lz_height_phrase_bound}
Let $\calE = (\calF, s_1, \dots, s_f)$ be an arbitrary LZ-like encoding of $S$ with $f$ phrases and height $h = \height(\calE)$.
Then, $f \geq \frac{m}{32} \left \lfloor \log_{h+1} \frac{m}{4} \right\rfloor$.
\end{theorem}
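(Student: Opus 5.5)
The plan is to sandwich the total cost $C_{\calE}$ of the character-wise increasing trees between two bounds. Lemma~\ref{lem:cost_4mf} supplies the upper bound $C_{\calE} \leq 4mf$. For the lower bound, I will apply Theorem~\ref{thm:recursive_tree_cost_bound} to every tree $\calT_x$ with many vertices. Comparing the two bounds and solving for $f$ will give the claim.

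\textbf{Lower bound on $C_{\calE}$.} Each $\calT_x$ is an increasing tree on $n_x$ vertices. Its height is at most $\height(\calE) = h$, since depths in $\calT_x$ coincide with depths in $\calE$. By definition of $C(\cdot,\cdot)$, this gives $\cost(\calT_x) \geq C(n_x, h)$.

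As established before Lemma~\ref{lem:cost_4mf}, exactly $m/2$ indices $x \in \Z_{m-1}$ satisfy $n_x \geq m/2$. For each of them, monotonicity of $C$ in the number of vertices gives $C(n_x, h) \geq C(m/2, h)$.

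\textbf{Handling height zero.} If $h = 0$, then $C(m/2, 0) = \infty$ because $m/2 \geq 4 \geq 2$. That contradicts the finite bound $C_{\calE} \leq 4mf$. So $h \geq 1$ holds automatically; alternatively, $h = 0$ is impossible for $S$ since $\mathtt{b}$ repeats. Here $\log_{1}$ is undefined, so I will simply note that the statement is only meaningful for $h \geq 1$.

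\textbf{Combining the bounds.} For $h \geq 1$, Theorem~\ref{thm:recursive_tree_cost_bound} with $r = m/2 \geq 2$ gives
\[
C(m/2, h) \geq \frac{m}{4} \left\lfloor \log_{h+1} \frac{m}{4} \right\rfloor .
\]
Summing over the $m/2$ large trees, and using nonnegativity of the remaining costs, yields
\[
C_{\calE} \geq \frac{m}{2}\cdot \frac{m}{4}\left\lfloor \log_{h+1} \frac{m}{4}\right\rfloor .
\]
Combining with $C_{\calE} \leq 4mf$ gives
\[
4mf \geq \frac{m^2}{8}\left\lfloor \log_{h+1}\frac{m}{4}\right\rfloor ,
\]
that is, $f \geq \frac{m}{32}\lfloor \log_{h+1}\frac{m}{4}\rfloor$.

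\textbf{Main obstacle.} No step is genuinely hard, since the heavy lifting is done by Lemmas~\ref{lem:block_lcs_bound} and~\ref{lem:cost_4mf} and Theorem~\ref{thm:recursive_tree_cost_bound}. The only points needing care are the bookkeeping on $\height(\calT_x) \leq h$ and the count of $m/2$ trees with $n_x \geq m/2$.
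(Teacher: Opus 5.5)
Your proposal is correct and follows the paper's proof essentially step for step. The paper likewise restricts to the $m/2$ trees with $n_x \geq m/2$, bounds each cost by $C(m/2,h) \geq \frac{m}{4}\lfloor\log_{h+1}\frac{m}{4}\rfloor$ via monotonicity and Theorem~\ref{thm:recursive_tree_cost_bound}, compares the sum with $C_{\calE} \leq 4mf$ from Lemma~\ref{lem:cost_4mf}, and obtains $h \geq 1$ from your alternative argument that $\mathtt{b}$ occurs more than once. Your first argument for $h \geq 1$ is better phrased as: each $\calT_x$ is an actual tree on at least $2$ vertices, so it has height at least $1$, and hence so does $\calE$.
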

\begin{proof}
Let $X = \{x \in \Z_{m - 1} \mid n_x \geq m/2\}$.
As shown above, $|X| = m/2$.
The character $\mathtt{b}$ occurs more than once in $S$.
Every non-leftmost occurrence of $\mathtt{b}$ is at a position of positive depth, so $h \geq 1$.

For every $x \in X$, we have $\height(\calT_x) \leq h$ and $n_x \geq m/2$.
Combining the monotonicity of $C$ and Theorem~\ref{thm:recursive_tree_cost_bound}, we obtain
\[
\cost(\calT_x)
  \geq C(n_x, h)
  \geq C(m/2, h)
  \geq \frac{m}{4} \left\lfloor \log_{h + 1} \frac{m}{4} \right\rfloor.
\]

By Lemma~\ref{lem:cost_4mf} and the preceding bound,
\[
4mf
  \geq C_{\calE}
  = \sum_{x \in \Z_{m - 1}} \cost(\calT_x)
  \geq \sum_{x \in X} \cost(\calT_x)
  \geq \frac{m}{2} \cdot \frac{m}{4} \left\lfloor \log_{h + 1} \frac{m}{4} \right\rfloor
  = \frac{m^2}{8} \left\lfloor \log_{h + 1} \frac{m}{4} \right\rfloor.
\]
Dividing by $4m$ proves the claimed lower bound on $f$.
\end{proof}

Theorem~\ref{thm:lz_height_phrase_bound} implies the following bounds for height-bounded LZ-like encodings.
\begin{theorem}
For the strings $S$ constructed above, let $n = |S|$.
For any function $h \colon \mathbb{Z}_{\geq 1} \to \mathbb{Z}_{\geq 1}$, the following statements hold.
\begin{enumerate}
  \item If $h(n) \in O(\polylog n)$, then $\hat{z}_{h(n)}(S) / z(S) \in \Omega\left(\frac{\log n}{\log \log n}\right)$.
  \item If $\hat{z}_{h(n)}(S) \in O(z(S))$, then $h(n) \in n^{\Omega(1)}$.
\end{enumerate}
\end{theorem}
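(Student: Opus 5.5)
Both parts follow by combining Theorem~\ref{thm:lz_height_phrase_bound} with the bound $z(S) \leq 5m - 5$. The link between the two is $n = |S| = m^2 - 1$, so $m = \Theta(\sqrt{n})$ and $\log(m/4) = \Theta(\log n)$. Throughout, we apply Theorem~\ref{thm:lz_height_phrase_bound} to an optimal encoding $\calE$ of height at most $h(n)$ and size $\hat{z}_{h(n)}(S)$. Its actual height $h'$ satisfies $1 \leq h' \leq h(n)$, and the bound $\frac{m}{32}\lfloor \log_{h'+1}(m/4)\rfloor$ is nonincreasing in $h'$. This gives
\[
\hat{z}_{h(n)}(S) \geq \frac{m}{32}\left\lfloor \frac{\ln(m/4)}{\ln(h(n)+1)} \right\rfloor .
\]
Dividing by $z(S) \leq 5m$ yields $\hat{z}_{h(n)}(S)/z(S) \geq \frac{1}{160}\lfloor \ln(m/4)/\ln(h(n)+1)\rfloor$.

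For part~1, we assume $h(n) \leq (\log n)^{c}$ for some constant $c$ and all large $n$. Then $\ln(h(n)+1) \in O(\log\log n)$ while $\ln(m/4) \in \Theta(\log n)$. The ratio inside the floor is therefore $\Omega(\log n/\log\log n)$, which tends to infinity. Once this ratio is at least $2$, the floor loses at most a factor of two, so $\hat{z}_{h(n)}(S)/z(S) \in \Omega(\log n/\log\log n)$.

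For part~2, we assume $\hat{z}_{h(n)}(S) \leq K z(S) \leq 5Km$ for some constant $K$ and all large $n$. Combining this with the displayed lower bound gives $\lfloor \ln(m/4)/\ln(h(n)+1)\rfloor \leq 160K$, hence $\ln(m/4)/\ln(h(n)+1) < 160K+1$. It follows that $h(n)+1 > (m/4)^{1/(160K+1)}$. Since $m/4 = \Theta(\sqrt{n})$, we get $h(n) \geq n^{\epsilon}$ for some constant $\epsilon > 0$ and all large $n$, that is, $h(n) \in n^{\Omega(1)}$.

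The main point requiring care is the direction of monotonicity in $h$. Theorem~\ref{thm:lz_height_phrase_bound} is stated for the exact height of $\calE$, so we must pass to the upper bound $h(n)$; the logarithm base $h+1$ makes the bound decrease in $h$, which is the direction we need. The other details are the floor function and the range of $n$: we handle the floor by taking $n$ large enough, and we read the statements asymptotically along the family of lengths $n = m^2 - 1$ for which $S$ is defined.
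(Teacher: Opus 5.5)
Your proposal is correct and follows the paper's proof essentially step for step. You apply Theorem~\ref{thm:lz_height_phrase_bound} to a smallest encoding of height at most $h(n)$, use monotonicity in the logarithm base to replace its actual height by $h(n)$, and then combine with $z(S) \in O(m)$ and $n = m^2 - 1$ for both parts; your explicit handling of the floor and of the constants ($1/160$, $160K+1$) is just a slightly more careful rendering of the same argument.
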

\begin{proof}
Let $\calE$ be a smallest LZ-like encoding of $S$ with height at most $h(n)$, and let $h' = \height(\calE)$.
Since $h' \leq h(n)$, Theorem~\ref{thm:lz_height_phrase_bound} and the monotonicity of the logarithm in its base give
\[
\hat{z}_{h(n)}(S)
  = |\calE|
  \geq \frac{m}{32} \left\lfloor \log_{h' + 1} \frac{m}{4} \right\rfloor
  \geq \frac{m}{32} \left\lfloor \log_{h(n) + 1} \frac{m}{4} \right\rfloor.
\]

Suppose first that $h(n) \in O(\polylog n)$.
Since $n = m^2 - 1$, we have $\log(m / 4) \in \Theta(\log n)$, whereas $\log(h(n) + 1) \in O(\log \log n)$.
It follows that $\left\lfloor \log_{h(n) + 1} (m / 4) \right\rfloor \in \Omega(\log n / \log \log n)$.
Since $z(S) \in O(m)$, the claimed lower bound on $\hat{z}_{h(n)}(S) / z(S)$ follows.

Suppose next that $\hat{z}_{h(n)}(S) \in O(z(S))$.
The bound $z(S) \in O(m)$ implies that there is a constant $\beta$ such that $\hat{z}_{h(n)}(S) \leq \beta m$ for all sufficiently large $n$.
Thus,
\[
\beta m
  \geq \hat{z}_{h(n)}(S)
  \geq \frac{m}{32} \left\lfloor \log_{h(n) + 1} \frac{m}{4} \right\rfloor,
\]
and hence $\log_{h(n) + 1}(m / 4) < 32 \beta + 1$.
Therefore, $h(n) + 1 > (m / 4)^{1/(32 \beta + 1)}$.
Since $n = m^2 - 1$, this proves that $h(n) \in n^{\Omega(1)}$.
\end{proof}
 \section*{AI Usage Disclosure}
The authors used OpenAI's GPT-5.6-Sol for drafting technical text, correcting typographical errors, improving the organization of the manuscript,
constructing the lower bound instances, and exploring proof strategies.
In particular, the model contributed to the lower bound instances and proofs for Section~\ref{sse:pref_deletion}, Section~\ref{sse:lex_lz}, and Section~\ref{sse:heightbounded}.
The authors verified all claims and proofs and take full responsibility for the contents of this paper.
 
\bibliographystyle{plain}
\bibliography{ref}

\end{document}